\documentclass[english]{article}
\usepackage{amsfonts}
\usepackage{amsmath}
\usepackage{amsthm} % for proof environment
\usepackage{epsfig}

\usepackage[a4paper]{geometry}
\geometry{verbose,tmargin=4cm,bmargin=4cm,lmargin=3cm,rmargin=3cm}
\usepackage{setspace}
\doublespacing

%\BlackBoxes

\newcommand{\e}{{\rm e}}
\newcommand{\E}{{\mathbb E}}
\newcommand{\Pa}{{\mathbb P}}

\newcommand{\R}{{\mathbb R}}

\newcommand{\Dcal}{{\mathcal D}}

\newcommand{\Pcal}{{\mathcal P}}

\newcommand{\Scal}{{\mathcal S}}
\newcommand{\Vcal}{{\mathcal V}}

\newtheorem{proposition}{Proposition}[section]
\newtheorem{lemma}[proposition]{Lemma}
\newtheorem{theorem}[proposition]{Theorem}
\newtheorem{definition}[proposition]{Definition}
\newtheorem{corollary}[proposition]{Corollary}
\newtheorem{remark}[proposition]{Remark}

\newtheorem{assumption}[proposition]{Assumption}

\newtheorem{exampleemph}[proposition]{Example}   % upshape
\newenvironment{example}{\begin{exampleemph}\begin{upshape}}{\end{upshape}\end{exampleemph}} % upshape
\newtheorem{foo}[proposition]{Remarks}

\begin{document}

\title{Optimal Investment and Premium Policies under Risk Shifting and Solvency Regulation\footnote{We thank Ole von H\"afen, Julien Hugonnier, Christian Laux, Achim Wambach, and participants of the 2010 World Risk and Insurance Economics Congress (WRIEC) for helpful comments. Part of this research has been carried out within the project on "Dynamic Asset Pricing" of the National Centre of Competence in Research "Financial Valuation and Risk Management" (NCCR FINRISK). The NCCR FINRISK is a research instrument of the Swiss National Science Foundation.}}
\author{Damir Filipovi\'c\footnote{\'Ecole Polytechnique F\'ed\'erale de Lausanne and Swiss Finance Institute, Quartier UNIL-Dorigny, Extranef 218, CH - 1015 Lausanne, Switzerland; Email: damir.filipovic@epfl.ch}\and Robert
Kremslehner\footnote{Vienna University of Economics and Business, Department of Finance, Accounting and Statistics, Heiligenst\"adter Str. 46-48,
A - 1190 Wien, Austria; Email: robert.kremslehner@wu.ac.at}\and
Alexander Muermann\footnote{Vienna University of Economics and
Business, Department of Finance, Accounting and Statistics, Heiligenst\"adter
Str. 46-48, A - 1190 Wien, Austria; Email: alexander.muermann@wu.ac.at}}

\date{March 8, 2011}
\maketitle

%\pagebreak{}
%\begin{center}
%{\LARGE Optimal Investment and Premium Policies under\\
%\vspace{0.7ex} Risk Shifting and Solvency Regulation}
%\end{center}
%\vspace{1cm}
\begin{abstract}
Limited liability creates a conflict of interests between policyholders and shareholders of insurance companies. It provides shareholders with incentives to increase the risk of the insurer's assets and liabilities which, in turn, might reduce the value policyholders attach to and premiums they are willing to pay for insurance coverage.

We characterize Pareto optimal investment and premium policies in this context and provide necessary and sufficient conditions for their existence and uniqueness. We then identify investment and premium policies under the risk shifting problem if shareholders cannot credibly commit to an investment strategy before policies are sold and premiums are paid. Last, we analyze the effect of solvency regulation, such as Solvency II or the Swiss Solvency Test, on the agency cost of the risk shifting problem and calibrate our model to a non-life insurer average portfolio.

\end{abstract}
\vspace{2ex}

\noindent{\bf Keywords:} Risk Shifting, Insurance, Regulation, Pareto Optimality
\\[2ex]
\textbf{JEL Classification}: D82, G11, G22, G28

%\pagebreak{}

\section{Introduction}

Risk shifting is a well-known agency problem in corporate finance
between shareholders and bondholders of a corporation (Jensen and
Meckling\ \cite{jm_76}, Green\ \cite{gre_84}, MacMinn\
\cite{macm_93}). The limited liability protection provides
incentives for management acting in the interests of shareholders to
select riskier projects at the expense of bondholders. If management
cannot credibly commit not to undertake those projects, bondholders
demand an appropriate interest rate differential which reflects the
agency cost.

Policyholders of a stock insurance company face a situation similar
to that of bondholders of a corporation. By paying premiums,
policyholders provide capital which is senior to equity but under
the investment decision of management acting in the interest of
shareholders. Limited liability provides incentives for management
to increase the risk of the insurer's assets and liabilities by, for
example, increasing the risk of the asset portfolio, selling
additional policies without a corresponding injection of equity, or
by changing the reinsurance arrangements or strategies in asset
liability management accordingly. This increase in the stock
insurer's risk may raise the insolvency probability of the
insurer and consequently reduce the value policyholders attach to
and the premium they are willing to pay for the insurance contract.

We provide a formal framework to study the conflict of interest
between risk-neutral shareholders of an insurance corporation and
its risk-averse policyholders. Shareholders have access to a
technology through which they can increase the risk of the insurer's
assets. We represent the technology by the possibility of investing
the total capital of the insurer, including equity and premium
payments, in a risky asset. Different risk profiles of the
insurer's assets are identified by different fractions of the
insurer's total capital invested in the risky asset. While
increasing the risk of the insurer's assets may serve shareholders'
interests, it may reduce the premium levels policyholders are
willing to pay and thereby the total capital available for
shareholders to invest.

In this context, we characterize the set of Pareto optimal
investment and premium policies. We show that, for any
policyholders' or shareholders' reservation utility level, there
exists a Pareto optimum and, under some mild assumption, it is unique. Moreover, we specify the necessary and
sufficient condition for the Pareto optimum.

We then investigate the risk shifting problem between shareholders
and policyholders, that is, the setting in which shareholders cannot
credibly commit to a specific investment strategy before
policies are sold and premiums are paid. This agency problem
generically leads to Pareto suboptimal investment policies and
corresponding premium levels. The investment technology implies that
the risk shifting problem admits only the boundary solution where the entire insurer's capital is invested
in the risky asset.

Last, we analyze the effect of solvency regulation in the context of
this agency problem. Solvency regulation imposes a constraint on the
set of possible investment and premium policies. We model the regulatory
constraint by some general convex risk measure and characterize the
corresponding solution. We show that there exists a unique
investment strategy and premium level that solves the risk shifting
problem under the regulatory constraint and analyze its effect on
the inefficiency of the risk shifting problem. Finally, we calibrate
our model to an European Economic Area non-life insurer average
portfolio taken from the QIS3 (Quantitative Impact Study 3)
Benchmarking Study~\cite{crof} of the Chief Risk Officer Forum under
the Solvency II standard model \cite{ceiops3,ceiops4} and illustrate
our analytical results.

Our paper relates to the literature on the risk shifting problem in
corporate finance (Jensen and Meckling\ \cite{jm_76}). Green\
\cite{gre_84} presents a formal model in which entrepreneurs decide
on the allocation of funds across two mutually non-exclusive projects where one project is
riskier than the other in the sense of Rothschild and Stiglitz\ \cite{rs_70}. The
investment technology exhibits some scale function which is strictly
concave in the amount invested in each project. The author shows
that the agency conflict leads to an overinvestment in the riskier
project and discusses the role of convertible bonds in eliminating
the risk shifting problem. MacMinn\ \cite{macm_93} analyzes the risk
shifting problem with two mutually exclusive projects but with a linear scale function. The author shows that a sufficiently high
level of leverage is necessary to induce the firm to switch from the
less risky to the riskier project and thereby to generate an agency
cost. Furthermore, there is a set of convertible contracts that
eliminate the risk shifting problem. We contribute to this
literature by analyzing the risk shifting problem in the insurance
setting where the two parties have differing costs of bearing risk
and which includes insurance losses as an additional source of risk.
Moreover, we examine the effect of solvency regulation on the agency
cost of the risk shifting problem. The investment technology in our
model can be interpreted as two projects which are mutually
non-exclusive (as in Green\ \cite{gre_84}) and exhibit a linear
scale function (as in MacMinn\ \cite{macm_93}).

Our paper also relates to the insurance literature that discusses
the conflicts of interest between shareholders and policyholders of
a stock insurer. Mayers and Smith\ \cite{ms_81},\ \cite{ms_88}
discuss an agency problem in the context of dividend policies. After
policies have been sold, shareholders have an incentive to increase
the value of their claim by raising dividends at the expense of
policyholders. They argue that the mutual organizational form of an
insurance company can help internalize this agency problem.
Doherty~\cite{doh_89} shows how an increase in the risk of
the insurer's asset portfolio increases the shareholders' position
at the expense of policyholders. We contribute to this literature by
providing a formal model to investigate this conflict of interest
and examine the effect of solvency regulation under the agency
problem. This framework allows us to determine the specific
conditions for existence and uniqueness of Pareto optimal risk
structures and premium levels, for the solutions under the agency
problem, and under solvency regulation.

Last, we contribute to the literature that analyzes the effect of
regulation in insurance markets. Munch and Smallwood\ \cite{mus_81}
and Finsinger and Pauly~\cite{fp_84} analyze the optimal amount of
shareholder capital and investment risk of an insurance company
under the assumption that shareholders cannot influence the risk
structure of the insurer's assets and that insurance premiums are
independent of the insurer's insolvency risk. Munch and
Smallwood~\cite{mus_81} argue that regulation may reduce the
insolvency risk if shareholder capital is low. Finsinger and
Pauly~\cite{fp_84}, however, argue that regulation of insurance
companies may be unnecessary in the long run with building up
reserves. McCabe and Witt~\cite{mw_80} and MacMinn and
Witt~\cite{mw_87} analyze the effect of different regulatory schemes
on the optimal investment and underwriting activity of an insurer
under the assumption that the demand function for insurance is
independent of the insurer's probability of insolvency. Rees et
al.~\cite{rgw_99} relax this assumption and show that if
policyholders are fully informed about the insurer's insolvency risk
then regulation serves no purpose. In our framework, shareholders
decide on the investment strategy, policyholders
are perfectly informed or, under the agency problem, have rational
expectations about the corresponding risk, and premiums
therefore depend on the implied insolvency risk. Moreover, we
examine the effect of solvency regulation on the agency cost.

The paper is structured as follows. We set up our model in
Section~\ref{secsetup} and characterize the set of Pareto optimal
policies in Section~\ref{secpareto}. In Section~\ref{secriskshift},
we examine the risk shifting problem and solvency regulation. In
Section~\ref{secnum}, we calibrate our model to data and illustrate
our results. We conclude in Section~\ref{secconcl}.

\section{Setup}\label{secsetup}

We consider a one-period economy with two agents, a policyholder and
a shareholder. The policyholder is endowed with some initial wealth
$w_0$ and faces a random loss $X$. His preferences are characterized
by some Bernoulli utility function $u:\R\to\R$.

The shareholder is risk-neutral. He owns a stock insurance company
with initial capital $c_0>0$ which offers full insurance coverage
for $X$ in exchange for a premium $p$. The shareholder has
access to a technology which allows him to increase the insurer's
risk. We represent this technology by an investment
opportunity in a risky asset that yields a random return $R$. The
shareholder thus decides on the fraction $\alpha\in [0,1]$ of the
total capital, $c_0+p$, to be invested in the risky asset. The
risk-free interest rate is assumed to be zero or, equivalently, all
values are in units of the risk-free numeraire. We assume that the shareholder has only access to the investment technology if he sells insurance. He may thus be willing to accept a negative premium $p>-c_0$ to gain access to the investment technology. We denote the set of investment and premium policies $(\alpha,p)$ by $\Pcal = [0,1]\times (-c_0,\infty)$.

The investment decision of the shareholder can be interpreted as an
allocation decision of funds across two projects, a risk-free
project with return zero and a risky project with random return $R$.
Moreover, the two projects are mutually
non-exclusive (as in Green\ \cite{gre_84}) and exhibit a linear
scale function (as in MacMinn\ \cite{macm_93}). For $\E\left[R \right]=0$, the project with
return $R$ is riskier in the sense of Rothschild and Stiglitz\
\cite{rs_70}. Allocating a higher fraction $\alpha$ of the total
capital, $c_0+p$, to the risky asset thus increases the risk of the
insurer's assets and liabilities. This can be achieved, for example,
by increasing the risk of the insurer's asset portfolio, by reducing
the duration matching of assets and liabilities, or by increasing
the attachment point of reinsurance contracts.

The end of the period surplus is given by $(c_0+p)(1+\alpha R)-X$.
If the surplus is negative, the insurance company is insolvent and
the shareholder is protected by limited liability. In this case, the
policyholder receives the remaining assets, $(c_0+p)(1+\alpha R)$.
Consequently, the terminal payoff to the shareholder equals
\[ ((c_0+p)(1+\alpha R)-X)^+ \]
while the terminal wealth of the policyholder is given by
\[ w_0-p-(X-(c_0+p)(1+\alpha R))^+ .\]
The corresponding utility of the shareholder and the policyholder as
a function of $\alpha$ and $p$ is
\[ U_{SH}(\alpha,p)=\E\left[ ((c_0+p)(1+\alpha R)-X)^+ \right] \]
and
\[ U_{PH}(\alpha,p)=\E\left[ u\left(w_0-p-(X-(c_0+p)(1+\alpha
R))^+\right)\right] ,\] respectively.
\begin{assumption}\label{ass1}
Throughout the paper, we make the following standing assumptions:
\begin{enumerate}
\item $u$ is increasing, concave, and twice differentiable on $\R$ with $u'>0$, $u''<0$, $\lim_{x\to
-\infty}u(x)=-\infty$ and $\lim_{x\to -\infty}u'(x)=\infty$.
  \item\label{ass1fxr}  $(X,R)$ takes values in $\R_+\times [-1,\infty)$ and admits a
jointly continuous density function $f(x,r)$.

\item The solvency event $\Scal(\alpha,p)=\{(c_0+p)(1+\alpha R)\ge X\}$ has
positive probability, $\Pa[\Scal(\alpha,p)]>0$, for all
$(\alpha,p)\in \Pcal$.

\item $u$ and $f$ are such that $U_{SH}$ and
$U_{PH}$ are real-valued and differentiable in some neighborhood of
${\Pcal}$, and the following formal manipulations (e.g.\
changing the order of differentiation and integration) are
meaningful.\footnote{For example, it is sufficient (but not
necessary) that $f$ has compact support.}
\end{enumerate}
\end{assumption}

Lemmas~\ref{lemvexcav}--\ref{lemUPHmax} in the appendix illustrate the qualitative behavior of the shareholder and policyholder utility function on $\Pcal$. In the sequel we will draw on these results without further mention. From Lemma~\ref{lemvexcav} we know that $U_{PH}(1,p)$ is strictly
concave in $p$. Hence there exists a unique critical premium level $p^{crit}_{PH}\in [-c_0,\infty)$ which maximizes the policyholder utility for $\alpha=1$,
\[ U_{PH}(1,p^{crit}_{PH})=\max_{p\in [-c_0,\infty)}\,U_{PH}(1,p).\]
We denote the
corresponding critical shareholder and policyholder utility levels by $\gamma^{crit}_{SH}=U_{SH}(1,p^{crit}_{PH})$ and $\gamma^{crit}_{PH}=U_{PH}(1,p^{crit}_{PH})$, and define the intervals\footnote{Note that $p^{crit}_{PH}=-c_0$ if and only if $\gamma^{crit}_{SH}=0$. }
\begin{align*}
  \Gamma_{SH}&=\begin{cases}
  (\gamma^{crit}_{SH},\infty)=(0,\infty),&\text{if $p^{crit}_{PH}=-c_0$,}\\ [\gamma^{crit}_{SH},\infty),&\text{if $p^{crit}_{PH}>-c_0$,}
\end{cases}\\
\Gamma_{PH}&=\begin{cases}
  (-\infty,\gamma^{crit}_{PH}),&\text{if $p^{crit}_{PH}=-c_0$,}\\
(-\infty,\gamma^{crit}_{PH}],&\text{if $p^{crit}_{PH}>-c_0$.}
\end{cases}
\end{align*}

\section{Pareto Optimal Investment and Premium Policies}\label{secpareto}

In this setup, we now examine optimal policies $(\alpha,p)\in\Pcal$ in the following sense:

\begin{definition}
The policy $(\alpha^\ast,p^\ast)\in \Pcal$ is Pareto
optimal if there does not exist any other policy $(\alpha,p)\in
\Pcal$ such that $U_{SH}(\alpha,p)\ge
U_{SH}(\alpha^\ast,p^\ast)$ and $U_{PH}(\alpha,p)\ge
U_{PH}(\alpha^\ast,p^\ast)$ with strict inequality for at least one
of them.
\end{definition}

We first show that Pareto optimality is equivalent to a constrained
optimization problem.

\begin{theorem}\label{thm1}
For any policy $(\alpha^\ast,p^\ast)\in \Pcal$, the following are
equivalent:
\begin{enumerate}
  \item\label{thm11} $(\alpha^\ast,p^\ast)$ is Pareto optimal.

\item\label{thm13} $(\alpha^\ast,p^\ast)$ solves the constrained optimization
  problem
  \begin{equation}\label{phopt}
  \begin{aligned}
    \max_{(\alpha,p)\in \Pcal}  & U_{PH}(\alpha,p)\\
    \text{s.t. }& U_{SH}(\alpha,p)\ge \gamma_{SH}
  \end{aligned}
  \end{equation}
  for the shareholder's reservation utility level $\gamma_{SH}=U_{SH}(\alpha^\ast,p^\ast)$, and $\gamma_{SH}\in\Gamma_{SH}$.

  \item\label{thm12} $(\alpha^\ast,p^\ast)$ solves the constrained optimization
  problem
  \begin{equation}\label{shopt}
  \begin{aligned}
    \max_{(\alpha,p)\in \Pcal}  & U_{SH}(\alpha,p)\\
    \text{s.t. }& U_{PH}(\alpha,p)\ge \gamma_{PH}
  \end{aligned}
  \end{equation}
  for the policyholder's reservation utility level $\gamma_{PH}=U_{PH}(\alpha^\ast,p^\ast)$, and $\gamma_{PH}\in\Gamma_{PH}$.

\end{enumerate}
Moreover, in either of the above optimization problems,
\eqref{phopt} and \eqref{shopt}, the respective reservation utility
constraint is binding.
\end{theorem}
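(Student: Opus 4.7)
My plan is to cycle through $(i)\Rightarrow(ii)\Rightarrow(i)$ and its symmetric analog for $(iii)$, treating the binding-constraint assertion as an intermediate lemma needed to close the less obvious direction.

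I begin with $(i)\Rightarrow(ii)$. Set $\gamma_{SH}:=U_{SH}(\alpha^\ast,p^\ast)$. Any feasible point of \eqref{phopt} with strictly larger $U_{PH}$ would Pareto-dominate $(\alpha^\ast,p^\ast)$, contradicting $(i)$; hence $(\alpha^\ast,p^\ast)$ solves \eqref{phopt}. For $\gamma_{SH}\in\Gamma_{SH}$, I appeal to the appendix lemmas, which identify $(1,p^{crit}_{PH})$ as the global maximizer of $U_{PH}$ on $\Pcal$ in the interior case $p^{crit}_{PH}>-c_0$, with value $\gamma^{crit}_{PH}$ and corresponding shareholder value $\gamma^{crit}_{SH}$ (with an analogous limiting statement in the boundary case). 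If $\gamma_{SH}$ were strictly below $\gamma^{crit}_{SH}$ (respectively $\leq 0$ in the boundary case), then $(1,p^{crit}_{PH})$ (or an approximating sequence) would yield strictly more utility to the shareholder while delivering at least $U_{PH}(\alpha^\ast,p^\ast)$ to the policyholder, contradicting Pareto optimality.

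Next I prove the binding-constraint lemma for \eqref{phopt}. Suppose $(\alpha^\ast,p^\ast)$ solves \eqref{phopt} with $\gamma_{SH}\in\Gamma_{SH}$ and $U_{SH}(\alpha^\ast,p^\ast)>\gamma_{SH}$. By continuity of $U_{SH}$ (Assumption~\ref{ass1}), the shareholder constraint is slack in an open $\Pcal$-neighborhood of $(\alpha^\ast,p^\ast)$, so $(\alpha^\ast,p^\ast)$ must be an unconstrained local maximum of $U_{PH}$ on $\Pcal$. The appendix characterization of $U_{PH}$'s maxima then forces $(\alpha^\ast,p^\ast)=(1,p^{crit}_{PH})$ in the interior case (and gives an outright contradiction in the boundary case, where no such maximum exists in $\Pcal$), whence $U_{SH}(\alpha^\ast,p^\ast)=\gamma^{crit}_{SH}\leq\gamma_{SH}$, contradicting the assumed slack. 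With this lemma in hand, $(ii)\Rightarrow(i)$ is immediate: any Pareto dominator $(\alpha,p)$ is feasible for \eqref{phopt}; strict $U_{PH}$-improvement contradicts $(ii)$, while strict $U_{SH}$-improvement with equal $U_{PH}$ makes $(\alpha,p)$ another solution of \eqref{phopt} carrying slack, contradicting the binding lemma applied to $(\alpha,p)$ itself. The equivalence $(i)\Leftrightarrow(iii)$ and binding in \eqref{shopt} proceed symmetrically: the relevant structural fact, again from the appendix, is that $U_{SH}(\alpha,p)=\E[((c_0+p)(1+\alpha R)-X)^+]$ is strictly increasing in $p$ on $\Pcal$ and therefore admits no interior local maximum, which rules out the slack case; and $\gamma_{PH}\in\Gamma_{PH}$ follows from $U_{PH}\leq\gamma^{crit}_{PH}$ throughout $\Pcal$.

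The main obstacle is extracting the correct global structure from the appendix: (a) that the global maximum of $U_{PH}$ over the full set $\Pcal$, not merely along the slice $\alpha=1$ on which strict concavity in $p$ was stated explicitly, is attained at $(1,p^{crit}_{PH})$ with value $\gamma^{crit}_{PH}$; and (b) that $U_{SH}$ has no interior critical point on $\Pcal$. Once these two facts are in place, every remaining step is either a direct Pareto-dominance comparison or a routine scalarization contradiction.
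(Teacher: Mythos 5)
Your proof is correct and follows essentially the same route as the paper's: you establish $\gamma_{SH}\in\Gamma_{SH}$ from the fact that $(1,p^{crit}_{PH})$ globally maximizes $U_{PH}$ on $\Pcal$ (Lemma~\ref{lemUPHmax}), you close the converse direction and the bindingness claim by ruling out a slack maximizer, and you handle the dual problem via $\partial_p U_{SH}>0$, exactly as the paper does. The one step to make explicit is that your binding lemma requires that $(1,p^{crit}_{PH})$ be the only \emph{local} maximum of $U_{PH}$ on $\Pcal$ -- slightly more than the global-maximizer statement of Lemma~\ref{lemUPHmax} -- but this follows from $\nabla U_{PH}\neq 0$ on the interior together with the signs of $\partial_p U_{PH}$ on the edges $\alpha=0$ and $\alpha=1$ (Lemma~\ref{lemgrad}), which is precisely the boundary check the paper's proof carries out directly.
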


\begin{proof}
{\ref{thm11}}$\Rightarrow${\ref{thm13}}: let $(\alpha^\ast,p^\ast)\in\Pcal$ be Pareto optimal. Then clearly $(\alpha^\ast,p^\ast)$ solves the constrained optimization problem \eqref{phopt}. It remains to be shown that $\gamma_{SH}\in\Gamma_{SH}$. If $(\alpha^\ast,p^\ast)=(1, p^{crit}_{PH})$ or if $p^{crit}_{PH}=-c_0$ then there is nothing to prove. So assume that $(\alpha^\ast,p^\ast)\neq (1, p^{crit}_{PH})\in\Pcal$. In view of Lemma~\ref{lemUPHmax} {\ref{lemUPHmax4}}, we have $U_{PH}(\alpha^\ast,p^\ast)< U_{PH}(1, p^{crit}_{PH})$. But then, by Pareto optimality of $(\alpha^\ast,p^\ast)$, we must have $\gamma_{SH}=U_{SH}(\alpha^\ast,p^\ast)>U_{SH}(1, p^{crit}_{PH})=\gamma^{crit}_{SH}$. This proves the claim.

{\ref{thm13}}$\Rightarrow${\ref{thm11}}: let $(\alpha^\ast,p^\ast)\in\Pcal$ be a maximizer of \eqref{phopt}. We argue by contradiction and assume that $(\alpha^\ast,p^\ast)$ is not Pareto optimal. Then
there exists some policy $(\bar{\alpha},\bar{p})\in\Pcal$ such that
$U_{PH}(\bar{\alpha},\bar{p})\ge U_{PH}(\alpha^\ast,p^\ast)$ and
$U_{SH}(\bar{\alpha},\bar{p})\ge \gamma_{SH}=U_{SH}(\alpha^\ast,p^\ast)$ with strict inequality
for at least one of them. If $U_{PH}(\bar{\alpha},\bar{p})>
U_{PH}(\alpha^\ast,p^\ast)$ then clearly $(\alpha^\ast,p^\ast)$ cannot be an optimizer of \eqref{phopt}. Hence we can assume that $U_{SH}(\bar{\alpha},\bar{p})>\gamma_{SH}$ and $U_{PH}(\bar{\alpha},\bar{p})= U_{PH}(\alpha^\ast,p^\ast)$. Then there exists a neighborhood $O$ of
$(\bar{\alpha},\bar{p})$ in $\Pcal$ such that $U_{SH}(\alpha,p)\ge\gamma_{SH}$ for all $(\alpha,p)\in O$. By Lemma~\ref{lemgrad} below, $\nabla U_{PH}(\bar{\alpha},\bar{p})\neq 0$, and $\partial_p U_{PH}(\bar{\alpha},\bar{p})<0$ if $\bar{\alpha}=0$ in particular. Moreover, if $\bar{\alpha}=1$ then $U_{SH}(1,\bar{p})>\gamma_{SH}\ge\gamma^{crit}_{SH}=U_{SH}(1, p^{crit}_{PH})$ implies $\bar{p}> p^{crit}_{PH}$ and thus $\partial_p U_{PH}(\bar{\alpha},\bar{p})<0$ again. Hence in either case we conclude that there exists some $(\alpha,p)\in O$ with
$U_{PH}(\alpha,p)>U_{PH}(\bar{\alpha},\bar{p})\ge
U_{PH}(\alpha^\ast,p^\ast)$, whence $(\alpha^\ast,p^\ast)$
does not solve \eqref{phopt}, which is absurd. Hence $(\alpha^\ast,p^\ast)$ is Pareto optimal.
 Moreover, this shows that the
reservation utility constraint $U_{SH}(\alpha,p)\ge \gamma_{SH}$ is binding.

The equivalence {\ref{thm11}}$\Leftrightarrow${\ref{thm12}} follows
similarly but simpler, since $\partial_p U_{SH}(\alpha,p)>0$ by
Lemma~\ref{lemgrad}.

\end{proof}

Pareto optimal investment and premium policies can thus be generated
 by a take-it-or-leave-it offer either of the policyholder to the shareholder (optimization problem
\eqref{phopt}) or of the shareholder to the
policyholder (optimization problem \eqref{shopt}). The participation constraint is binding in either
case because both the policyholder's and shareholder's preferences
are locally non-satiated.

Note that part {\ref{thm13}} of Theorem~\ref{thm1} implies that there exists no Pareto optimal policy for a shareholder utility level $\gamma_{SH}\notin \Gamma_{SH}$. In fact, as for the existence of Pareto optimal policies, we have the following theorem.

\begin{theorem}\label{thm2}
For any reservation utility level $\gamma_{SH}\in \Gamma_{SH}$ and $\gamma_{PH}\in\Gamma_{PH}$, respectively, there exists at least one
Pareto optimum $(\alpha^\ast,p^\ast)\in \Pcal$ with $U_{SH}(\alpha^\ast,p^\ast)=\gamma_{SH}$ and $U_{PH}(\alpha^\ast,p^\ast)=\gamma_{PH}$, respectively. It
satisfies the first order condition
\begin{equation}\label{foceq}
 \E\left[ R\, u'\left(w_0-p^\ast-(X-(c_0+p^\ast)(1+\alpha^\ast R))^+\right)\right]\begin{cases}
  \le 0, &\text{if $\alpha^\ast=0$,}\\
  = 0, &\text{if $0<\alpha^\ast<1$,}\\
  \ge 0, &\text{if $\alpha^\ast=1$}.
\end{cases}
\end{equation}
Moreover, for any $\alpha^\ast\in [0,1]$ there exists at most one
Pareto optimum.
\end{theorem}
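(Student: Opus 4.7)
The plan is to use Theorem~\ref{thm1} to replace the search for Pareto optima by the constrained program~\eqref{phopt} (or~\eqref{shopt}), and then combine compactness for existence, KKT multipliers for the FOC, and strict concavity of $U_{PH}(\alpha,\cdot)$ together with strict monotonicity of $U_{SH}(\alpha,\cdot)$ for uniqueness.

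\textbf{Existence.} Fix $\gamma_{SH}\in\Gamma_{SH}$ and put $F=\{(\alpha,p)\in\Pcal:U_{SH}(\alpha,p)\ge\gamma_{SH}\}$. By the definition of $\Gamma_{SH}$ the set $F$ is nonempty, and by continuity of $U_{SH}$ it is closed in $\Pcal$. Since $u(x)\to-\infty$ as $x\to-\infty$ we have $U_{PH}(\alpha,p)\to-\infty$ as $p\to\infty$ uniformly in $\alpha$, while the binding-type constraint $U_{SH}\ge\gamma_{SH}$ keeps $p$ bounded away from $-c_0$ whenever $\gamma_{SH}>0$. The supremum of the continuous $U_{PH}$ on $F$ is therefore attained on a compact subset of $F$, yielding a maximizer $(\alpha^\ast,p^\ast)$. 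Theorem~\ref{thm1} identifies this with a Pareto optimum for which the constraint is in fact binding, i.e.\ $U_{SH}(\alpha^\ast,p^\ast)=\gamma_{SH}$. The case $\gamma_{PH}\in\Gamma_{PH}$ via~\eqref{shopt} is analogous, exploiting that $U_{SH}$ is strictly increasing in $p$ (Lemma~\ref{lemgrad}).

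\textbf{First-order condition.} At the maximizer the binding constraint $U_{SH}(\alpha,p)\ge\gamma_{SH}$ delivers a KKT multiplier $\mu\ge 0$ with $\partial_p U_{PH}+\mu\,\partial_p U_{SH}=0$ and the appropriate one-sided version of $\partial_\alpha U_{PH}+\mu\,\partial_\alpha U_{SH}$: $\le 0$ at $\alpha^\ast=0$, $\ge 0$ at $\alpha^\ast=1$, and equality for $\alpha^\ast\in(0,1)$. Writing the partial derivatives explicitly and using the key simplification that $u'(\cdot)=u'(w_0-p^\ast)$ on the solvency event $\Scal(\alpha^\ast,p^\ast)$, the $\partial_p$-equation can be solved uniquely for $\mu=u'(w_0-p^\ast)>0$. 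Substituting this value of $\mu$ back into the $\partial_\alpha$-condition and factoring out the positive constant $(c_0+p^\ast)$ collapses it exactly to the signed inequality~\eqref{foceq}.

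\textbf{Uniqueness given $\alpha^\ast$.} Suppose, for contradiction, that $(\alpha^\ast,p_1)$ and $(\alpha^\ast,p_2)$ are both Pareto optimal with $p_1<p_2$. Strict monotonicity of $U_{SH}(\alpha^\ast,\cdot)$ gives $\gamma_1:=U_{SH}(\alpha^\ast,p_1)<U_{SH}(\alpha^\ast,p_2)=:\gamma_2$; since neither point Pareto-dominates the other we must have $U_{PH}(\alpha^\ast,p_1)>U_{PH}(\alpha^\ast,p_2)$, and strict concavity of $U_{PH}(\alpha^\ast,\cdot)$ (Lemma~\ref{lemvexcav}) then forces both $p_i$ to lie on the decreasing branch of $U_{PH}(\alpha^\ast,\cdot)$. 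Each point satisfies~\eqref{foceq}, which by the previous step is equivalent to the KKT system with multiplier $\mu_i=u'(w_0-p_i)$, and since $u'$ is strictly decreasing $\mu_1\neq\mu_2$. The plan is then to combine the $\partial_\alpha$- and $\partial_p$-KKT conditions into one relation in $p$ with $\alpha^\ast$ held fixed and to show that, together with the self-consistency requirement $\mu=u'(w_0-p)$, this relation has a unique solution $p^\ast$ in the set $\{p\ge p^\star(\alpha^\ast)\}$, contradicting $p_1\neq p_2$. The main obstacle is precisely this last step: on $\Scal(p)$ and $\Scal(p)^c$ the quantity $A(\alpha^\ast,p)$ reacts to $p$ with opposite signs, so the uniqueness cannot be read off from pointwise monotonicity of the FOC integrand but must be extracted from the combined KKT/strict-concavity structure along the Pareto-optimal branch.
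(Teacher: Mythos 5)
Your existence argument is sound and is essentially the paper's: reduce to \eqref{phopt} via Theorem~\ref{thm1}, observe that $U_{SH}\to 0$ uniformly as $p\downarrow -c_0$ while $\gamma_{SH}>0$, and that $U_{PH}\le u(w_0-p)\to-\infty$ as $p\to\infty$, so the maximization takes place on a compact superlevel set of $U_{PH}$ inside the feasible set (the paper packages this as Lemma~\ref{lemUPHmax}~\ref{lemUPHmax3}). Your first-order condition is obtained by a genuinely different route: the paper changes variables to $(v,w)=((c_0+p)\alpha,c_0+p)$, parametrizes the shareholder level curve by the implicit function theorem, and differentiates $V_{PH}$ along it, whereas you work with KKT multipliers in the original coordinates. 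The KKT route is viable, but your intermediate claim that the $\partial_p$-stationarity equation "can be solved uniquely for $\mu=u'(w_0-p^\ast)$" is false in general: writing $\Scal=\Scal(\alpha^\ast,p^\ast)$ and using Lemma~\ref{lemgrad}, the multiplier is
\[
\mu=-\frac{\partial_p U_{PH}}{\partial_p U_{SH}}=\frac{u'(w_0-p^\ast)\,\Pa[\Scal]-\alpha^\ast\,\E\bigl[R\,u'(\cdots)\,1_{\Scal^c}\bigr]}{\Pa[\Scal]+\alpha^\ast\,\E\bigl[R\,1_{\Scal}\bigr]},
\]
which equals $u'(w_0-p^\ast)$ only when $\alpha^\ast\,\E[R\,u'(\cdots)]=0$, hence not at a boundary optimum $\alpha^\ast=1$ where \eqref{foceq} holds with strict inequality. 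The conclusion survives, but only after the algebra you skipped: substituting this $\mu$ into the $\alpha$-condition gives $\partial_\alpha U_{PH}+\mu\,\partial_\alpha U_{SH}=\frac{(c_0+p^\ast)\Pa[\Scal]}{\partial_p U_{SH}}\,\E\bigl[R\,u'(\cdots)\bigr]$, a positive multiple of the left-hand side of \eqref{foceq}. You should replace the incorrect identification of $\mu$ by this computation.

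The genuine gap is the uniqueness claim, which you explicitly leave as a ``plan'' with an acknowledged ``main obstacle''; as written, nothing is proved there. You are also attacking a harder statement than the one asserted: the ``at most one Pareto optimum for each $\alpha^\ast$'' is quantified over the \emph{given} reservation utility level, and at that level it is a one-line consequence of the binding constraint. By Theorem~\ref{thm1} any Pareto optimum at level $\gamma_{SH}$ satisfies $U_{SH}(\alpha^\ast,p^\ast)=\gamma_{SH}$, and $p\mapsto U_{SH}(\alpha^\ast,p)$ is strictly increasing (Lemma~\ref{lemgrad}), so $p^\ast$ is determined by $\alpha^\ast$. For the dual case, two Pareto optima $(\alpha^\ast,p_1)$ and $(\alpha^\ast,p_2)$ with $U_{PH}(\alpha^\ast,p_1)=U_{PH}(\alpha^\ast,p_2)=\gamma_{PH}$ must have equal shareholder utilities, since otherwise one would Pareto-dominate the other, and strict monotonicity of $U_{SH}(\alpha^\ast,\cdot)$ again forces $p_1=p_2$. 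This is exactly the paper's (very short) argument; your attempt to extract global uniqueness across all reservation levels from the FOC is unnecessary for the theorem and, as you yourself note, does not go through by pointwise monotonicity of the integrand.
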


\begin{proof}

In view of Theorem~\ref{thm1}, it is enough to consider the
optimization problems~\eqref{phopt} and \eqref{shopt}, respectively. First note that, in view of Lemmas~\ref{lemvexcav} and \ref{lemgrad}, $U_{PH}(\alpha,p)$ is strictly concave and $U_{SH}(\alpha,p)$ is strictly increasing in $p$, for all $\alpha\in [0,1]$. Hence, for
any fixed $\alpha\in [0,1]$, there can be at most one Pareto
optimum. Moreover, by Lemma~\ref{lemUPHmax}, we have $\Gamma_{SH}\subseteq U_{SH}(\Pcal)$ and $\Gamma_{PH}=U_{PH}(\Pcal)$. Hence the constraint sets in \eqref{phopt} and \eqref{shopt} are non-empty. Lemma~\ref{lemUPHmax} {\ref{lemUPHmax3}} implies that for any $\gamma_{PH}\in\Gamma_{PH}$ the level set $\{ U_{PH}\ge \gamma_{PH}\}\subset\Pcal$ is compact in $\Pcal$. Since $U_{PH}$ and $U_{SH}$ are continuous on $\Pcal$, we conclude that the maximum in both optimization problems \eqref{phopt} and \eqref{shopt}, and thus the Pareto optimum at the respective reservation utility level, is attained in $\Pcal$.

For the derivation of the first order condition, it is convenient to
introduce the following diffeomorphism:
\begin{align*}
 \Pcal&\to   \Vcal=\{(v,w)\mid 0< v\le w<\infty\}\\
 (\alpha,p)&\mapsto (v,w)=((c_0+p)\alpha,c_0+p).
\end{align*}
Note that $w$ is the total asset value of the insurer and $v$ is the
money invested in the stock market. The corresponding utility of the
shareholder and the policyholder as a function of the new
coordinates $(v,w)$ is
\begin{equation}\label{vwnotation}
\begin{aligned}
  V_{SH}(v,w)&=\E\left[ (w+v R-X)^+ \right]\\
V_{PH}(v,w)&=\E\left[ u\left(w_0+c_0-w-(X-w-v R)^+\right)\right],
\end{aligned}
\end{equation}
so that $V_{SH}(v,w)=U_{SH}(\alpha,p)$ and
$V_{PH}(v,w)=U_{PH}(\alpha,p)$. For simplicity of notation, we use
the same letter $\Scal(v,w)=\Scal(\alpha ,p )$ for the respective
solvency event.

We note that $V_{SH}$ is a convex and $V_{PH}$ is a
concave function jointly in $(v,w)$. In contrast, $U_{SH}$ and
$U_{PH}$ do not share these properties as functions jointly in
$(\alpha,p)$ in general.

By Lemma~\ref{lemgradV}, we have $\partial_w V_{PH}<0$. Hence, for any $\gamma_{SH}\in \Gamma_{SH}$, the implicit function theorem yields a
continuously differentiable function $W:I\to (0,\infty)$ on some
interval $I\subset\R_+$ with $(v,W(v))\in\Vcal$,
$V_{SH}(v,W(v))=\gamma_{SH}$, and
\begin{equation}\label{Wpeq}
  W'(v)=-\frac{\partial_v V_{SH}(v,W(v)) }{\partial_w V_{SH}(v,W(v))}=- \frac{\E\left[R\,1_{\Scal(v,w)}\right]}{\Pa \left[{\Scal(v,w)}\right]}.
\end{equation}
Since for every fixed $\alpha\in [0,1]$ the function $V_{SH}(\alpha w,w)=U_{SH}(\alpha,w-c_0)$ is strictly increasing in $w$ and maps the interval $(0,\infty)$ onto itself, we can assume that $I=[0,v']$ for some $v'>0$, and that $W(0)=0$ and $W(v')=v'$.

We now characterize the critical points for the policyholder utility
function along the level curve $(v,W(v))$. A calculation shows
\begin{equation}\label{foceqpre}
 \begin{aligned}
  \frac{d}{dv} V_{PH}(v,W(v)) &=\partial_v V_{PH}(v  ,W(v)  )+W'(v)\partial_w
V_{PH}(v,W(v))\\
&=\E\left[R\,u'(w_0+c_0-X+vR)\,1_{{\Scal(v,w)}^c}\right] \\
&\quad+\frac{\E\left[R\,1_{\Scal(v,w)}  \right]}{\Pa \left[{\Scal(v,w)} \right]} u'(w_0+c_0-w)\Pa \left[{\Scal(v,w)} \right]\\
&= \E\left[R\,u'\left(w_0+c_0-W(v)-(X-W(v)-v R)^+\right) \right].
\end{aligned}
\end{equation}
Hence any Pareto optimal $(v^\ast,w^\ast)\in\Vcal$ satisfies
\[\E\left[R\,u'\left(w_0+c_0-w^\ast-(X-w^\ast-v^\ast R)^+\right)\right]
\begin{cases}
\le 0,&\text{if $v^\ast=0$,}\\=0&\text{if $0<v^\ast<w^\ast$,}\\
\ge 0,&\text{if $v^\ast=w^\ast$.}
\end{cases}\]
This proves \eqref{foceq}.
\end{proof}

In this theorem, we have shown the existence of Pareto optimal
investment and premium policies for any admissible reservation
utility level of the policyholder or the shareholder. This result is
thus valid for different degrees of competition in the insurance
market which can be represented by different reservation utility
levels. A higher degree of competition in the insurance market is
reflected by a lower reservation utility level of the shareholder.
In a perfectly competitive market, the shareholder reservation
utility level is given by $\gamma_{SH}=c_0$, derived from his
outside option of not selling insurance.

In the following theorem, we specify the condition under which the
Pareto optimum is unique and under which the first order
condition~\eqref{foceq} is also sufficient.

\begin{theorem}\label{thm3}
Assume that $\E[R]\neq 0$ or that, for any $(\alpha,p)\in \Pcal$, either the insolvency event has positive probability, $\Pa\left[{\Scal(\alpha,p)}^c\right]>0$, or $\E\left[R\,1_{\Scal(\alpha,p)}\right]\neq 0$. Then for any reservation utility level $\gamma_{SH}\in \Gamma_{SH}$ and $\gamma_{PH}\in \Gamma_{PH}$, respectively, there exists a unique Pareto
optimum $(\alpha^\ast,p^\ast)\in \Pcal$. Moreover, the
first order condition~\eqref{foceq} is also sufficient for Pareto
optimality of $(\alpha^\ast,p^\ast)\in \Pcal$.
\end{theorem}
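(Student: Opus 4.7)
Since Theorem~\ref{thm2} already gives existence, only uniqueness and sufficiency of \eqref{foceq} remain. My plan is to upgrade the one-dimensional reduction used in the proof of Theorem~\ref{thm2}: on $[0,v']$, along the curve $w=W(v)$ parametrising $\{V_{SH}=\gamma_{SH}\}$, the problem \eqref{phopt} reduces to the maximization of $g(v):=V_{PH}(v,W(v))$, and I will show that under the stated hypothesis $g$ is \emph{strictly} concave. Strict concavity on a compact interval forces a unique maximizer and makes the boundary-adjusted stationarity condition \eqref{foceq} both necessary and sufficient; passing back through the diffeomorphism $(v,w)=((c_0+p)\alpha,c_0+p)$ then yields uniqueness of the Pareto optimum at any $\gamma_{SH}\in\Gamma_{SH}$ together with sufficiency of \eqref{foceq}. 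The corresponding statement at level $\gamma_{PH}\in\Gamma_{PH}$ follows from Theorem~\ref{thm1}, since two Pareto optima sharing the same $V_{PH}$-value must, by Pareto optimality, also share the same $V_{SH}$-value and thereby reduce to the already-handled $\gamma_{SH}$ case.

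The central step is the formula for $g''$. Starting from $g'(v)=\E[R\,u'(Y(v))]$ with $Y(v)=w_0+c_0-W(v)-(X-W(v)-vR)^{+}$ from \eqref{foceqpre}, I differentiate once more using $\partial_v Y=-W'(v)\,1_{\Scal(v)}+R\,1_{\Scal(v)^c}$; the moving boundary $\{X=W(v)+vR\}$ carries zero mass by Assumption~\ref{ass1}\ref{ass1fxr}, so it contributes nothing. Exploiting that $u''(Y(v))$ is constant on $\Scal(v)$ and substituting $\E[R\,1_{\Scal(v)}]=-W'(v)\Pa[\Scal(v)]$ from \eqref{Wpeq} to eliminate the cross term yields the clean split
\[
g''(v)=\bigl(W'(v)\bigr)^{2}\,u''\bigl(w_0+c_0-W(v)\bigr)\,\Pa[\Scal(v)]+\E\bigl[R^{2}\,u''\bigl(w_0+c_0-X+vR\bigr)\,1_{\Scal(v)^c}\bigr],
\]
both summands being $\le 0$ because $u''<0$.

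It then remains to rule out their simultaneous vanishing at any $v\in[0,v']$. Since $R$ admits a density, $\Pa[R=0]=0$, so the second summand vanishes iff $\Pa[\Scal(v)^c]=0$; by \eqref{Wpeq} the first vanishes iff $\E[R\,1_{\Scal(v)}]=0$. The hypothesis is engineered precisely for this: if $\Pa[\Scal(v)^c]=0$ then $\E[R\,1_{\Scal(v)}]=\E[R]$, which is non-zero by the first branch, while the second branch excludes the coincidence outright. Hence $g''<0$ strictly on $[0,v']$. The main obstacle I anticipate is not the simultaneous-vanishing analysis itself but rather arriving at the clean form above for $g''$: one must organise the differentiation under the expectation so that the moving solvency boundary contributes no extra term (handled via Assumption~\ref{ass1}\ref{ass1fxr}) and so that the cross term involving $W'(v)\,u''(w_0+c_0-W(v))\,\E[R\,1_{\Scal(v)}]$ is absorbed via \eqref{Wpeq} into the first summand as a perfect square, leaving exactly the two pieces that the hypothesis is designed to prevent from vanishing together.
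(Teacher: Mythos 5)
Your proposal is correct and follows essentially the same route as the paper: reduce to the level curve $w=W(v)$, compute $\frac{d^2}{dv^2}V_{PH}(v,W(v))$, use \eqref{Wpeq} to write the cross term as $\E\left[R\,1_{\Scal}\right]^2 u''/\Pa[\Scal]$ (your $(W')^2\,u''\,\Pa[\Scal]$ is the same quantity), and observe that the hypothesis rules out simultaneous vanishing of the two non-positive summands. The case analysis (second term negative when $\Pa[\Scal^c]>0$; otherwise $\E[R\,1_{\Scal}]=\E[R]\neq 0$ forces the first term negative) matches the paper's argument exactly.
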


\begin{proof}
We use the $(v,w)$-coordinates as introduced in
the proof of Theorem~\ref{thm2}. Fix $\gamma_{SH}\in \Gamma_{SH}$, and let $W:I\to (0,\infty)$ be the corresponding level curve as in the
proof of Theorem~\ref{thm2}. In view of \eqref{Wpeq} and \eqref{foceqpre}, the second derivative of $V_{PH}(v,W(v))$ equals\footnote{The differentiation under the expectation
sign is justified by Assumption~\ref{ass1}, see the proof of
Lemma~\ref{lemgrad} below.}
\begin{align*}
 \frac{d^2}{dv^2} V_{PH}(v,W(v))  &=
 -W'(v ) \E\left[R\,1_{{\Scal(v ,W(v ))}}
\right]u''\left(w_0+c_0-W(v )\right)\\
&\quad + \E\left[R^2\,u''\left(w_0+c_0-X+v  R
\right)\,1_{{\Scal(v ,W(v ))}^c}
\right]\\
&= \frac{\E\left[R\,1_{\Scal(v,W(v))}\right]^2}{\Pa \left[{\Scal(v,W(v))}\right]}u''\left(w_0+c_0-W(v )\right)\\
 &\quad+ \E\left[R^2\,u''\left(w_0+c_0-X+v  R
\right)\,1_{{\Scal(v ,W(v ))}^c}
\right],
\end{align*}
which under the assumption of the theorem is negative. Indeed, both summands on the right hand side are non-positive. If the insolvency event has positive probability, $\Pa[{\Scal(v ,W(v ))}^c]>0$,  then the second term is negative. Otherwise $\Pa[{\Scal(v ,W(v ))}]=1$ and thus $\E\left[R\,1_{\Scal(v,W(v))}\right]^2=\E\left[R\right]^2>0$, so that the first term is negative.
Hence $V_{SH}(v,W(v))$ is strictly concave in $v\in I$. This proves the theorem.
\end{proof}

In the proof of this theorem, it is shown that, under very mild assumptions, the policyholder's utility as a function of
$\alpha$ along a shareholder's level curve, $U_{SH}(\alpha,p)=
\gamma_{SH}$, can only assume three shapes. Either it is strictly
decreasing in which case $\alpha^\ast=0$ is the Pareto optimal
investment policy. Or it is strictly increasing in which case
$\alpha^\ast=1$ is the Pareto optimal investment policy. Last, it
can take a unique inner maximum $0<\alpha^\ast<1$ and is strictly
increasing to the left and strictly decreasing to the right. In each
of these three cases, the Pareto optimal premium policy $p^\ast$ is
uniquely defined by $U_{SH}(\alpha^\ast,p^\ast)= \gamma_{SH}$.

\begin{remark}\label{rem3}
  Along similar arguments as in the proof of Theorem~\ref{thm3}, but under the more stringent assumption that $\E\left[ R\,1_{\Scal(\alpha,p)}\right]>0$ for all
  $(\alpha,p)\in \Pcal$, one can show that also the shareholder's utility along a policyholder's level curve, $U_{PH}=
\gamma_{PH}$, can only assume the afore mentioned three shapes: increasing, decreasing, or first increasing and then decreasing.\footnote{Since $\partial_w V_{PH}<0$, the implicit function theorem yields a $C^1$-policyholder utility level curve $v\mapsto W(v)$ in $\Vcal$. Some calculations show that $\frac{d}{dv} V_{SH}(v,W(v)) = \frac{\E\left[R\,u'\left(w_0+c_0-W(v)-(X-W(v)-v R)^+\right)
\right]}{u'(w_0+c_0-W(v))}$. Further one can show that for critical points $v^\ast$ where $\frac{d}{dv} V_{SH}(v^\ast,W(v^\ast))=0$ we have $W'(v^\ast)<0$ and
\begin{align*}
\frac{d^2}{dv^2} V_{SH}(v,W(v))|_{v=v^\ast} &=
 -W'(v^\ast )\frac{u''\left(w_0+c_0-W(v^\ast )\right)\E\left[R\,1_{{\Scal(v^\ast ,W(v^\ast ))}}
\right]}{u'(w_0+c_0-W(v^\ast ))}\\
&\quad +\frac{\E\left[R^2\,u''\left(w_0+c_0-X+v^\ast  R
\right)\,1_{{\Scal(v^\ast ,W(v^\ast ))}^c}
\right]}{u'(w_0+c_0-W(v^\ast ))}.
\end{align*}
 From this we conclude that $\frac{d^2}{dv^2} V_{SH}(v,W(v))|_{v=v^\ast}<0$. Hence every critical point is a local maximum. This shows that $V_{SH}(v,W(v))$ has the desired properties.} Again, the respective Pareto optimal policy is
uniquely defined.
\end{remark}

We terminate this section with a generic example where there exists no inner Pareto optimum $(\alpha^\ast,p^\ast)\in\Pcal$ with investment policy $\alpha^\ast>0$.

\begin{example}
 For this example we assume that
\begin{enumerate}
  \item $X$ and $R$ are independent,
  \item $\E[R]=0$,
  \item $\Pa\left[\Scal(\alpha,p)^c \right]>0$ for all $(\alpha,p)\in \Pcal$.
\end{enumerate}
In particular, the assumption in Theorem~\ref{thm3} is satisfied. Now fix an arbitrary policy $(\alpha,p)\in \Pcal$ with $\alpha>0$. We claim that
\begin{equation}\label{hilfsin}
   \E\left[ R\, u'\left(w_0-p-(X-(c_0+p)(1+\alpha  R))^+\right)\right]<0.
\end{equation}
This together with the first order condition~\eqref{foceq} and Theorem~\ref{thm3} then implies that the set of all Pareto optimal policies in $\Pcal$ is given by $\{(0,p^\ast)\mid p^\ast\in (-c_0,\infty)\}$. For the proof of \eqref{hilfsin} we first observe that
\[ u'\left(w_0-p-(X-(c_0+p)(1+\alpha  R))^+\right)\begin{cases}
  \le u'\left(w_0-p -(X-(c_0+p))^+\right)&\text{on $\{R\ge 0\}$}\\
  \ge u'\left(w_0-p -(X-(c_0+p))^+\right)&\text{on $\{R<0\}$}
\end{cases}\]
with strict inequalities on $\Scal(\alpha,p)^c\cap\{R>0\}$ and $\Scal(\alpha,p)^c\cap\{R<0\}$, respectively. Hence
\[ R\, u'\left(w_0-p-(X-(c_0+p)(1+\alpha  R))^+\right)\le  R\, u'\left(w_0-p-(X-(c_0+p))^+\right)\]
with strict inequality on $\Scal(\alpha,p)^c\cap\{R\neq 0\}$. Since $\Pa\left[\Scal(\alpha,p)^c\cap\{R\neq 0\}\right]=\Pa\left[\Scal(\alpha,p)^c \right]>0$, by the above assumptions, we obtain
\begin{align*}
  \E\left[ R\, u'\left(w_0-p-(X-(c_0+p)(1+\alpha  R))^+\right)\right]&< \E\left[ R\, u'\left(w_0-p-(X-(c_0+p) )^+\right)\right]\\
  &=\E\left[ R\right] \E\left[u'\left(w_0-p-(X-(c_0+p) )^+\right)\right]\\
  &=0,
\end{align*}
which proves \eqref{hilfsin}.
\end{example}

\section{Risk Shifting and Solvency Regulation}\label{secriskshift}

In this section, we focus in our analysis of the risk shifting
problem on a competitive insurance market. The shareholder is
therefore held at his reservation utility level which is derived
from his outside option of not selling insurance. We assume that the
shareholder has only access to the investment technology if he sells
insurance. The shareholder's participation constraint is thus given
by $U_{SH}(\alpha,p)\ge c_0$.\footnote{Although we focus on a
competitive insurance market, we show that all results also hold
under the dual problem with the policyholder's participation
constraint. The results are thus valid for different degrees of
competition.}
\begin{assumption}\label{assrs}
Throughout this section, we make the following standing assumptions:
\begin{enumerate}
  \item\label{assrs1} $c_0\ge \gamma^{crit}_{SH}$,

  \item\label{assrs2} $\E[R\mid X\le x]>0$ for all $x\in(0,\infty)$.
\end{enumerate}
\end{assumption}

\begin{lemma}\label{lemUPalpha}
Assumption \ref{assrs}{\ref{assrs2}} is equivalent to $\E\left[R\,1_{\Scal(\alpha,p)}\right]>0$ for all $(\alpha,p)\in \Pcal$, which again is equivalent to
\begin{equation}\label{thm2assnewfullX}
  \partial_\alpha U_{SH}(\alpha,p)>0\quad\text{for all
  $(\alpha,p)\in \Pcal$.}
\end{equation}
In this case, the assumption in Theorem~\ref{thm3} is satisfied.
\end{lemma}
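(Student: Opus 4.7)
The plan is to establish the chain of equivalences (i) Assumption~\ref{assrs}{\ref{assrs2}} $\Leftrightarrow$ (ii) $\E[R\,1_{\Scal(\alpha,p)}]>0$ for all $(\alpha,p)\in\Pcal$ $\Leftrightarrow$ (iii) \eqref{thm2assnewfullX}, from which the addendum about Theorem~\ref{thm3} falls out instantly.

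For the equivalence (ii)$\Leftrightarrow$(iii) I would differentiate $U_{SH}(\alpha,p)=\E[((c_0+p)(1+\alpha R)-X)^+]$ with respect to $\alpha$ under the expectation sign. This interchange is legitimate by Assumption~\ref{ass1}, exactly as in the proof of Lemma~\ref{lemgrad}. The routine calculation yields
\[
\partial_\alpha U_{SH}(\alpha,p)=(c_0+p)\,\E[R\,1_{\Scal(\alpha,p)}],
\]
and since $c_0+p>0$ on $\Pcal$, the two strict-positivity statements are equivalent.

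The easy direction (ii)$\Rightarrow$(i) follows by specializing to $\alpha=0$ and $p=x-c_0$ for arbitrary $x\in(0,\infty)$; then $\Scal(0,p)=\{X\le x\}$, which has positive probability by Assumption~\ref{ass1}{\ref{ass1fxr}}, so dividing $\E[R\,1_{\{X\le x\}}]>0$ by $\Pa[X\le x]$ gives $\E[R\mid X\le x]>0$.

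The only genuinely interesting step is (i)$\Rightarrow$(ii), for which I would use the pointwise comparison
\[
R\,1_{\Scal(\alpha,p)}\ \ge\ R\,1_{\{X\le c_0+p\}}.
\]
This holds because on $\{R\ge 0\}$ we have $(c_0+p)(1+\alpha R)\ge c_0+p$, so $\Scal(\alpha,p)\supseteq\{X\le c_0+p\}$, whereas on $\{R<0\}$ the reverse inclusion is in effect; in either region multiplication by $R$ preserves the sign of the indicator difference. Taking expectations and applying~(i) at $x=c_0+p$ yields $\E[R\,1_{\Scal(\alpha,p)}]\ge\Pa[X\le c_0+p]\,\E[R\mid X\le c_0+p]>0$, which is (ii). The final assertion about Theorem~\ref{thm3} is then immediate, as (ii) already guarantees $\E[R\,1_{\Scal(\alpha,p)}]\neq 0$ for every $(\alpha,p)\in\Pcal$, satisfying the second clause of its hypothesis. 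I do not anticipate any real obstacle: aside from the pointwise comparison in the last step, which is the lone piece requiring genuine thought, everything reduces to differentiation under the expectation and elementary set inclusions.
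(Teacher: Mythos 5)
Your proof is correct and takes essentially the same route as the paper's: the derivative formula from Lemma~\ref{lemgrad} together with $c_0+p>0$ gives the second equivalence, setting $\alpha=0$ gives one direction of the first, and your pointwise comparison $R\,1_{\Scal(\alpha,p)}\ge R\,1_{\{X\le c_0+p\}}$ is exactly the inequality the paper asserts ``by inspection.'' The only slip is a citation: the positivity of $\Pa[X\le x]$ needed to divide in the (ii)$\Rightarrow$(i) step comes from the standing assumption that the solvency event $\Scal(\alpha,p)$ has positive probability, not from the joint-density assumption you quote.
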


\begin{proof}
It follows by inspection that $\E\left[R\,1_{\{X\le c_0+p\}}\right]\le\E\left[R\,1_{\Scal(\alpha,p)}\right]$ for all $(\alpha,p)\in \Pcal$. In view of Lemma~\ref{lemgrad}, Assumption \ref{assrs}{\ref{assrs2}} implies
\eqref{thm2assnewfullX}. Conversely, that \eqref{thm2assnewfullX} implies {\ref{assrs2}}
follows from setting $\alpha=0$ in \eqref{thm2assnewfullX}. Moreover, it follows that the assumption in Theorem~\ref{thm3} is satisfied.
\end{proof}

\begin{remark}\label{remexpreturn}
Assumption \ref{assrs}{\ref{assrs2}} states that the conditional expected
return of the risky asset given bounded insurance losses is positive.
From a risk management point of view, it is important to note that
this moderate assumption is compatible with stress scenarios of
negative expected returns under catastrophic insurance losses. A
sufficient condition for {\ref{assrs2}} is
\[ \E\left[ R\mid X=x\right]>0\quad \text{for all $ {\rm ess \,inf}\, X<x<{\rm ess \,sup}\, X
  $}.\]
In particular, this sufficient condition holds if the risky asset
has a positive unconditional expected return, $\E[R]>0$, and $R$ and
$X$ are independent.
\end{remark}

We now consider the following sequence of events. The policyholder
pays a premium $p$ in exchange for full insurance coverage of $X$.
The shareholder then decides on the fraction $\alpha$ of the total
capital, $c_0+p$, to be invested in the risky asset.

If the shareholder can commit to an investment policy $\alpha$
before the insurance premium $p$ is paid, then the optimal
investment and premium policy is given by the solution to the
following optimization problem:

\begin{equation}\label{phoptcomp}
  \begin{aligned}
    \max_{(\alpha,p)\in \Pcal}  & U_{PH}(\alpha,p)\\
    \text{s.t. }& U_{SH}(\alpha,p)\ge \gamma_{SH}
  \end{aligned}
  \end{equation}
  for the shareholder's reservation utility level $\gamma_{SH}=c_0$. By Theorem~\ref{thm1} and Assumption~\ref{assrs},
the solution to \eqref{phoptcomp} is Pareto optimal.

We now examine the situation in which the shareholder cannot
credibly commit to an investment policy $\alpha$ before the
insurance premium $p$ is paid. That is, after the policyholder has
paid the insurance premium $p$, the shareholder chooses $\alpha$ by
solving the following optimization problem:

\begin{equation}\label{shoptars}
  \begin{aligned}
    \max_{\alpha\in [0,1]}  & U_{SH}(\alpha,p).\\
  \end{aligned}
  \end{equation}
By Lemma~\ref{lemUPalpha}, for any fixed $p\in (-c_0,\infty)$,
$U_{SH}(\alpha,p)$ is strictly increasing in $\alpha$. Hence only the boundary
solution is possible, i.e.\ ${\rm arg\, max}_{\alpha}\,
U_{SH}(\alpha,p)=1$. The shareholder invests the
entire total capital in the risky asset.\footnote{There is
only the boundary solution under the risk shifting problem in our
setting due to two features of the investment technology. First, any
fraction $\alpha\in[0,1]$ can be invested in the risky asset. Second, the investment technology
exhibits a linear scale function, i.e., there are no costs other
than the agency cost involved in shifting risk. A strictly concave
scale function, as assumed in Green\ \cite{gre_84} with mutually
non-exclusive projects, would imply that the shareholder overinvests
in risk but not necessarily up to the boundary.}

The policyholder has rational expectations about this investment
strategy and the premium $p$ has to satisfy the incentive
compatibility constraint that the shareholder sets the investment
strategy according to optimization problem~\eqref{shoptars}. We can
thus formalize the agency problem by the following constrained
optimization problem:

\begin{equation}\label{shoptrs}
  \begin{aligned}
    \max_{(\alpha,p)\in \Pcal}  & U_{PH}(\alpha,p)\\
    \text{s.t. }& U_{SH}(\alpha,p)\ge \gamma_{SH},\\
    &\alpha\in {\rm arg\, max}_{\alpha'}\, U_{SH}(\alpha',p)
  \end{aligned}
  \end{equation}
for the shareholder's reservation utility level $\gamma_{SH}=c_0$.
In view of Theorem~\ref{thm1} {\ref{thm13}}, any solution of the risk shifting problem
\eqref{shoptrs} is generically Pareto suboptimal.

In the following theorem, we characterize the solution and show that
it can be equivalently derived from maximal shareholder utility constraining on the
policyholder's reservation utility level.\footnote{The results thus
also apply to other degrees of competition, including a monopolistic
insurance market which is given by the policyholder's reservation
utility level derived from his outside option of not buying
insurance, i.e.\ $\gamma_{PH}=\E\left[ u(w_0-X)\right]$. Note that this is covered by the theorem since $\E\left[ u(w_0-X)\right]<\E\left[ u(w_0+c_0-X)\right]\le \gamma^{crit}_{PH}$.}

\begin{theorem}\label{thmrs}
For any reservation utility level $\gamma_{SH}\in \Gamma_{SH}$ and $\gamma_{PH}\in\Gamma_{PH}$, respectively, there
exists a unique solution $(\bar{\alpha},\bar{p})$ in
$\Pcal$ to the constrained optimization problem \eqref{shoptrs} and to the dual problem
\begin{equation}\label{phoptrs}
  \begin{aligned}
    \max_{(\alpha,p)\in \Pcal}  & U_{SH}(\alpha,p)\\
    \text{s.t. }& U_{PH}(\alpha,p)\ge \gamma_{PH},\\
    &\alpha\in {\rm arg\, max}_{\alpha'}\, U_{SH}(\alpha',p)
  \end{aligned}
  \end{equation}
respectively. This solution satisfies $\bar{\alpha}=1$, and $U_{SH}(1,\bar{p})= \gamma_{SH}$ and $U_{PH}(1,\bar{p})= \gamma_{PH}$, respectively. It is Pareto optimal if and only if $\E\left[ R\, u'\left(w_0-\bar{p}-(X-(c_0+\bar{p})(1+
R))^+\right)\right]\ge  0$.

Moreover, any solution $(\bar{\alpha},\bar{p})$ to \eqref{shoptrs} with $\gamma_{SH}=U_{SH}(\bar{\alpha},\bar{p})\in \Gamma_{SH}$ is a solution to \eqref{phoptrs} with $\gamma_{PH}=U_{PH}(\bar{\alpha},\bar{p})\in  \Gamma_{PH}$, and vice versa.

\end{theorem}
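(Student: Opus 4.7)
The plan is to collapse the two-dimensional problems onto the face $\alpha=1$ via the incentive-compatibility constraint and then handle the resulting one-dimensional problems using monotonicity and concavity in $p$. By Lemma~\ref{lemUPalpha}, $\partial_\alpha U_{SH}(\alpha,p)>0$ on all of $\Pcal$, so $U_{SH}(\cdot,p)$ is uniquely maximized at $\alpha=1$ for every $p\in(-c_0,\infty)$; the incentive-compatibility constraint in both \eqref{shoptrs} and \eqref{phoptrs} therefore forces $\bar\alpha=1$, and both problems reduce to one-dimensional optimizations in $p$.

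For the reduced version of \eqref{shoptrs}, Lemma~\ref{lemgrad} gives strict monotonicity of $U_{SH}(1,\cdot)$, and a monotone-convergence argument shows $U_{SH}(1,p)\to\infty$ as $p\to\infty$, so the feasible set is a half-line $[p_{SH},\infty)$ with $U_{SH}(1,p_{SH})=\gamma_{SH}$. Because $\gamma_{SH}\in\Gamma_{SH}$ means $\gamma_{SH}\ge\gamma^{crit}_{SH}=U_{SH}(1,p^{crit}_{PH})$, monotonicity forces $p_{SH}\ge p^{crit}_{PH}$; and since $U_{PH}(1,\cdot)$ is strictly concave with peak at $p^{crit}_{PH}$ by Lemma~\ref{lemvexcav}, it is strictly decreasing on the feasible half-line. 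The unique maximizer is therefore $\bar p=p_{SH}$, with the shareholder constraint binding. For the reduced \eqref{phoptrs}, strict concavity of $U_{PH}(1,\cdot)$ combined with $U_{PH}(1,p)\to-\infty$ as $p\to\infty$ makes $\{p:U_{PH}(1,p)\ge\gamma_{PH}\}$ a nonempty bounded closed interval, on which the strictly increasing $U_{SH}(1,\cdot)$ attains its unique maximum at the right endpoint, where necessarily $U_{PH}(1,\bar p)=\gamma_{PH}$.

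Pareto optimality of $(1,\bar p)$ is then immediate from Theorem~\ref{thm3}, whose hypothesis holds thanks to Lemma~\ref{lemUPalpha}: the first-order condition \eqref{foceq} is both necessary and sufficient, and at $\alpha^\ast=1$ it reduces precisely to the displayed expectation inequality. For the duality, given a solution $(1,\bar p)$ of \eqref{shoptrs} with $\gamma_{SH}:=U_{SH}(1,\bar p)\in\Gamma_{SH}$, the bound $\bar p\ge p^{crit}_{PH}$ established in the previous paragraph yields $\gamma_{PH}:=U_{PH}(1,\bar p)\le\gamma^{crit}_{PH}$, and the analysis of \eqref{phoptrs} applied to this $\gamma_{PH}$ recovers the same unique $(1,\bar p)$; the converse is symmetric. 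The only delicate point is the boundary regime $p^{crit}_{PH}=-c_0$, where $\Gamma_{SH}$ and $\Gamma_{PH}$ are half-open, so one must invoke the strict inequality $\bar p>-c_0$ to secure $\gamma_{PH}<\gamma^{crit}_{PH}$ and hence membership in $\Gamma_{PH}$; all other steps are routine one-variable calculus.
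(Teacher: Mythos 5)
Your proof is correct and follows essentially the same route as the paper's: the incentive-compatibility constraint collapses both problems onto the face $\alpha=1$ via Lemma~\ref{lemUPalpha}, the reduced one-dimensional problems are handled by monotonicity of $U_{SH}(1,\cdot)$ and strict concavity of $U_{PH}(1,\cdot)$ together with the observation that $\gamma_{SH}\in\Gamma_{SH}$ is equivalent to $\bar p\ge p^{crit}_{PH}$, and Pareto optimality is read off from Theorem~\ref{thm3}. Your explicit treatment of the boundary regime $p^{crit}_{PH}=-c_0$ in the duality step is in fact slightly more careful than the paper's, which leaves that case implicit.
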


\begin{proof}
As seen below \eqref{shoptars}, for any fixed $p\in(-c_0,\infty)$, we have ${\rm
arg\, max}_{\alpha'}\, U_{SH}(\alpha',p)=1$. Hence problem \eqref{shoptrs} boils down to optimizing the continuous function $U_{PH}(1,p)$ on the interval $[\bar{p},\infty)$ with $\bar{p}$ determined by $U_{SH}(1,\bar{p})=\gamma_{SH}$. Note that $\gamma_{SH}\in\Gamma_{SH}$ is equivalent to $\bar{p}\ge p^{crit}_{PH}$. Thus, $U_{PH}(1,p)$ is strictly decreasing in $p\in[\bar{p},\infty)$. The optimizer of \eqref{shoptrs} is thus $(1,\bar{p})$. The first order condition for Pareto optimality follows from Theorem~\ref{thm3}, see Lemma~\ref{lemUPalpha}. A similar argument applies to problem \eqref{phoptrs}. We conclude, in particular, that any solution $(1,\bar{p})$ to either \eqref{shoptrs} with $\gamma_{SH}\in \Gamma_{SH}$ or \eqref{phoptrs} with $\gamma_{PH}\in \Gamma_{PH}$ satisfies $\bar{p}\ge p^{crit}_{PH}$. The last statement thus follows by the fact that $U_{PH}(1,p)$ and $U_{SH}(1,p)$ are strictly decreasing and increasing in $p\in [p^{crit}_{PH},\infty)$, respectively.
\end{proof}

The inefficiency of the investment and premium policy arises if
shareholders cannot credibly commit to an investment strategy before
premiums are paid. Any credible commitment device for shareholders
would thus increase welfare. In the context of corporate finance,
Green\ \cite{gre_84} and MacMinn\ \cite{macm_93} have shown that
issuing convertible bonds can provide such a commitment device and
eliminate the risk shifting problem.

In our context of insurance there are also various contractual and
organizational features that might provide some form of commitment
device. Shareholders can limit the insurer's risk exposure by
including restrictions on their investment and dividend policies in
their corporate charter. Shareholders could also transfer parts of
their assets to an escrow account. These restrictions limit the
extent to which shareholders can increase the insurer's risk
structure and thereby reduce the inefficiency caused by the risk
shifting problem. Along the lines of convertible bonds,
participating policies reduce the benefit to shareholders from
increasing the insurer's risk. The implied benefit of
partially overcoming the risk shifting problem might outweigh the
cost of higher exposure for policyholders to the insurer's risk.
Last, changing the organizational form to a mutual form would
eliminate the risk shifting problem. Under the mutual form, owners
who decide on the insurer's risk structure coincide with providers
of capital. The incentive problem is thereby eliminated at the cost
of less diversified owners.

In the following, we take the insurance contract as given and explore the effect of solvency regulation on the agency cost of the risk shifting problem. The regulator assesses
the riskiness of the annual loss
\[ L(\alpha,p)=c_0 - \left((c_0+p)(1+\alpha R)-X\right)=-(c_0+p)\alpha R +
X-p\] by means of a risk measure $\rho$. The respective regulatory
requirement is that the available capital, $c_0$, be greater than
the required capital, $\rho\left(L(\alpha,p)\right)$:
\[ \rho\left(L(\alpha,p)\right)\le c_0.\]
This capital requirement restricts the set of feasible investment
and premium policies. Solvency capital requirement can thus be
interpreted as a commitment device for shareholders imposed by the
regulator.

The risk shifting problem \eqref{shoptrs} under this additional regulatory
constraint is as follows
\begin{equation}\label{shoptrsreg}
  \begin{aligned}
    \max_{(\alpha,p)\in \Pcal}  & U_{PH}(\alpha,p)\\
    \text{s.t. }& U_{SH}(\alpha,p)\ge \gamma_{SH},\\
    &\alpha\in {\rm arg\, max}_{\alpha'}\, U_{SH}(\alpha',p)\\
    &\qquad\qquad \text{s.t. } \rho(L(\alpha',p))\le c_0
  \end{aligned}
  \end{equation}
for the shareholder's reservation utility level $\gamma_{SH}=c_0$.

We make the following standard assumptions for risk measures, see
e.g.\ McNeil et al.\ \cite{mfe}.

\begin{assumption}\label{assrho}
Throughout, $\rho$ satisfies the following conditions:
\begin{enumerate}
  \item $\rho$ is cash-invariant, that is, $\rho(L+c)=\rho(L)+c$ for
  any constant cash amount $c\in \R$ and random loss $L$.

  \item $\rho$ is convex, that is,
\[ \rho(\lambda L+(1-\lambda)L')\le
\lambda\rho(L)+(1-\lambda)\rho(L') \] for all $\lambda\in [0,1]$ and
random losses $L,\,L'$.

\item\label{assrhomon} $\rho$ is monotone, that is, $\rho(L)\le\rho(L')$ if $L\le L'$.

  \item\label{assrho2} $\rho(L(\alpha,p))$ is continuous in $\alpha\in [0,1]$ for
  all $p\in (-c_0,\infty)$.
\end{enumerate}
\end{assumption}

The cash-invariance property of $\rho$ is motivated by its
interpretation as regulatory capital requirement. Adding a
deterministic cash amount $c$ to the position, the capital
requirement is reduced by the same amount. The economic idea behind
the convexity assumption of $\rho$ is that diversification by means
of combining risks reduces overall risk and therefore the capital
requirement. This assumption is crucial for the constrained problem
\eqref{shoptrsreg} to be well-posed. See also Remark~\ref{remconv}
below. The assumption of monotonicity is economically meaningful
since an annual loss greater in any state of the world should lead
to a higher capital requirement. Property {\ref{assrho2}} is of
technical nature. It implies that
\[ \alpha_\rho(p)=\sup\left\{ \alpha\in [0,1]\mid
\rho(L(\alpha,p))\le c_0\right\}\] satisfies either
$\rho(L(\alpha_\rho(p),p))=c_0$, or $\alpha_\rho(p)=1$ if
$\rho(L(\alpha,p))\le c_0$ for all $\alpha\in [0,1]$, or
$\alpha_\rho(p)=-\infty$ if $\rho(L(\alpha,p))> c_0$ for all
$\alpha\in [0,1]$.\footnote{Following the usual convention, we
define $\sup\emptyset=-\infty$.} Since the shareholder prefers great
$\alpha$, that is, $\partial_\alpha U_{SH}>0$, we have that $\alpha_\rho(p)$
equals the ${\rm arg\, max}$ in the regulatory constrained
subproblem in \eqref{shoptrsreg} given that it is well-posed, that
is, $\alpha_\rho(p)>-\infty$. Property {\ref{assrhomon}} implies that $\rho(L(\alpha,p))\le \rho(L(\alpha,p'))$ if $p\ge p'$. Hence the domain $\Dcal_\alpha=\{ p\in (-c_0,\infty)\mid \alpha_\rho(p)>-\infty\}$ of $\alpha_\rho$ is either empty or an interval with $\sup \Dcal_\alpha=\infty$. We define the corresponding intervals of feasible utility levels
\begin{align*}
 \Gamma^{reg}_{SH}&=\left\{
U_{SH}(\alpha_\rho(p),p)\mid p\in\Dcal_\alpha\right\}\\
\Gamma^{reg}_{PH}&=\left\{U_{PH}(\alpha_\rho(p),p)\mid p\in\Dcal_\alpha\right\}.
\end{align*}

Here is our existence and uniqueness result for the regulatory
constrained risk shifting problem~\eqref{shoptrsreg}.

\begin{theorem}\label{thmrsreg}
Assume $\Dcal_\alpha\neq\emptyset$. For any reservation utility level $\gamma_{SH}\in \Gamma^{reg}_{SH}$ and $\gamma_{PH}\in \Gamma^{reg}_{PH}$, respectively,  there exists a unique
solution $(\hat{\alpha},\hat{p})$ in $\Pcal$ to the constrained optimization problem \eqref{shoptrsreg} and to the dual problem
\begin{equation}\label{phoptrsreg}
  \begin{aligned}
    \max_{(\alpha,p)\in \Pcal}  & U_{SH}(\alpha,p)\\
    \text{s.t. }& U_{PH}(\alpha,p)\ge \gamma_{PH},\\
    &\alpha\in {\rm arg\, max}_{\alpha'}\, U_{SH}(\alpha',p)\\
    &\qquad\qquad \text{s.t. } \rho(L(\alpha',p))\le c_0
  \end{aligned}
  \end{equation}
respectively. This solution satisfies $U_{SH}(\hat{\alpha},\hat{p})\ge
\gamma_{SH}$ and $U_{PH}(\hat{\alpha},\hat{p})=
\gamma_{PH}$, respectively, and $\rho(L(\hat{\alpha},\hat{p}))= c_0$ if
$\hat{\alpha}<1$.

Moreover, any solution $(\hat{\alpha},\hat{p})$ to \eqref{shoptrsreg} with $\gamma_{SH}=U_{SH}(\hat{\alpha},\hat{p})\in \Gamma^{reg}_{SH}$ is a solution to \eqref{phoptrsreg} with $\gamma_{PH}=U_{PH}(\hat{\alpha},\hat{p})\in  \Gamma^{reg}_{PH}$, and vice versa.

\end{theorem}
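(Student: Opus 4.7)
The plan is to recast \eqref{shoptrsreg} in the $(v,w)$-coordinates of Theorem~\ref{thm2} and reduce to a one-dimensional problem on a convex feasible set. By cash-invariance of $\rho$ applied to $L=c_0-w-vR+X$, the regulatory constraint $\rho(L)\le c_0$ is equivalent to $w\ge\phi(v)$ with $\phi(v):=\rho(X-vR)$, and $\phi$ is convex since $\rho$ is convex and $v\mapsto X-vR$ is affine. Hence the set $K=\{(v,w)\in\Vcal : v\le w,\ w\ge\phi(v)\}$ is convex. By Lemma~\ref{lemUPalpha}, $\partial_v V_{SH}>0$, so the inner subproblem of \eqref{shoptrsreg}, for fixed $w$, has the unique solution $v^*(w):=\min(v_\rho(w),w)$, where $v_\rho(w):=\sup\{v\ge 0 : \phi(v)\le w\}$; continuity of $v^*$ follows from Assumption~\ref{assrho}\ref{assrho2}, and $v^*$ is non-decreasing in $w$ since $\{v:\phi(v)\le w\}$ grows with $w$.

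Substituting the inner solution reduces the outer problem to
\[
\max_{w\in\Dcal_w}\,\tilde h(w)\quad\text{s.t.}\quad \tilde g(w)\ge\gamma_{SH},
\]
with $\tilde g(w):=V_{SH}(v^*(w),w)$ and $\tilde h(w):=V_{PH}(v^*(w),w)$. Since $\partial_v V_{SH},\partial_w V_{SH}>0$ and both $v^*(w)$ and $w$ are non-decreasing in $w$, $\tilde g$ is continuous and strictly increasing with image $\Gamma^{reg}_{SH}$; thus for $\gamma_{SH}\in\Gamma^{reg}_{SH}$ there is a unique $\hat w_{\min}$ with $\tilde g(\hat w_{\min})=\gamma_{SH}$, and the feasible set is $[\hat w_{\min},\infty)\cap\Dcal_w$. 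Existence of a maximizer follows from continuity of $\tilde h$ and the coercivity bound $\tilde h(w)\le\E[u(w_0+c_0-w)]\to-\infty$ as $w\to\infty$ (using $\lim_{x\to-\infty}u(x)=-\infty$).

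The main step is uniqueness, which reduces to strict unimodality of $\tilde h$ on $\Dcal_w$. The ridge $w\mapsto(v^*(w),w)$ decomposes into a \emph{Case~A} portion on the regulatory boundary $\{w=\phi(v)\}$, corresponding to $\hat\alpha<1$, and a \emph{Case~B} portion on the diagonal $\{v=w\}$, corresponding to $\hat\alpha=1$. Re-parametrizing Case~A by $v$, differentiation gives
\[
\frac{d^2}{dv^2}V_{PH}(v,\phi(v))
=(1,\phi'(v))\,H_{V_{PH}}(v,\phi(v))\begin{pmatrix}1\\ \phi'(v)\end{pmatrix}
+\phi''(v)\,\partial_w V_{PH}(v,\phi(v))\le 0,
\]
since $H_{V_{PH}}$ is negative semi-definite (joint concavity of $V_{PH}$), $\phi''\ge 0$, and $\partial_w V_{PH}<0$ (Lemma~\ref{lemgradV}); under the continuous-density Assumption~\ref{ass1}\ref{ass1fxr} and strict concavity of $u$, the inequality becomes strict, so $\tilde h$ is strictly concave on Case~A. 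On Case~B, $\tilde h(w)=U_{PH}(1,w-c_0)$ is strictly concave by Lemma~\ref{lemvexcav}. At the junction $v=w=v_B$ with $\phi(v_B)=v_B$ and $\phi'(v_B^-)\le 1$ (convexity of $\phi$ combined with $\phi\le v$ to the right of $v_B$), the one-sided derivatives of $\tilde h$ satisfy $\tilde h'(v_B^-)-\tilde h'(v_B^+)=(\phi'(v_B^-)-1)\partial_w V_{PH}(v_B,v_B)\ge 0$, so the derivative does not jump up across the junction. Hence $\tilde h$ is strictly unimodal, and its maximum on $[\hat w_{\min},\infty)$ is attained at a unique $\hat w$.

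The unique optimizer satisfies $U_{SH}(\hat\alpha,\hat p)=\tilde g(\hat w)\ge\gamma_{SH}$ (binding precisely when $\hat w=\hat w_{\min}$, i.e., when the unconstrained mode of $\tilde h$ lies to the left of $\hat w_{\min}$), and $\rho(L(\hat\alpha,\hat p))=c_0$ whenever $\hat\alpha<1$ by definition of Case~A. The dual problem \eqref{phoptrsreg} is handled analogously: by strict unimodality of $\tilde h$, the superlevel set $\{w:\tilde h(w)\ge\gamma_{PH}\}$ is a closed interval $[w^-,w^+]$ for $\gamma_{PH}\in\Gamma^{reg}_{PH}$, and $\tilde g$ strictly increasing attains its maximum uniquely at $w^+$, where the policyholder constraint binds. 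The equivalence between solutions of \eqref{shoptrsreg} and \eqref{phoptrsreg} then follows by matching their reservation-utility levels along the common ridge. The main technical obstacle is the strict unimodality of $\tilde h$, where the interplay between the convex regulatory boundary $\phi$ and the concave policyholder utility plays the decisive role through the negativity of $\partial_w V_{PH}$.
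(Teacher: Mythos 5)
Your overall strategy is the same as the paper's: pass to the $(v,w)$-coordinates, use cash-invariance to write the regulatory constraint as $w\ge\phi(v)$ with $\phi(v)=\rho(X-vR)$ convex, observe that $\partial_v V_{SH}>0$ forces any solution onto the ridge $w\mapsto (v^*(w),w)$, and then reduce everything to monotonicity of the shareholder utility and unimodality of the policyholder utility along that ridge. The reduction, the existence argument, the treatment of the dual problem via the superlevel interval of $\tilde h$, and the matching of reservation levels are all sound and essentially identical to the paper's proof.

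The gap is in your uniqueness step. You establish strict unimodality of $\tilde h$ on the Case~A portion by computing $\frac{d^2}{dv^2}V_{PH}(v,\phi(v))$ via the chain rule, which requires $\phi$ to be twice differentiable. Assumption~\ref{assrho} only gives convexity, cash-invariance, monotonicity, and continuity of $\rho(L(\alpha,p))$ in $\alpha$; it does not make $\phi$ even once differentiable (think of a worst-case-over-scenarios risk measure, for which $\phi$ is piecewise linear with kinks), so the displayed identity with $\phi''(v)$ is not available, and neither is the one-sided-derivative bookkeeping at the junction in the form you state it. In addition, the claimed strictness of the inequality is asserted rather than proved: the quadratic form $(1,\phi'(v))\,H_{V_{PH}}\,(1,\phi'(v))^{\top}$ is only guaranteed strictly negative when the direction has a nonzero $w$-component (the paper's footnote shows strict concavity of $V_{PH}$ only along lines $(a+bw,w)$; in the pure $v$-direction $V_{PH}$ can be affine when $\Pa[\Scal^c]=0$), so at a point with $\phi'(v)=0$ and $\phi''(v)=0$ your second derivative can vanish. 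The paper circumvents both problems with a non-smooth argument: since $v_\rho$ is concave and non-decreasing (concavity coming from convexity of $V_R$), for $w_1<w_2<w_3$ one finds $\lambda$ with $v_\rho(w_2)=\lambda v_\rho(w_1)+(1-\lambda)v_\rho(w_3)$ and $w_2\le\lambda w_1+(1-\lambda)w_3$, and then $\partial_w V_{PH}<0$ plus strict concavity of $V_{PH}$ along the straight line joining the two outer ridge points (a line of the form $(a+bw,w)$) yields strict quasiconcavity of $V_{PH}(v_\rho(w),w)$ directly, with no differentiation of the boundary. Your argument would be repaired by replacing the second-derivative computation with exactly this three-point convexity inequality; as written, it does not cover the full class of risk measures admitted by Assumption~\ref{assrho}.
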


\begin{proof}
We argue in the $(v,w)$-coordinates introduced in the proof of
Theorem~\ref{thm2}, and define the corresponding regulator's risk
measurement function
\[ V_R(v,w)=\rho(-w-v R+X)=-w+\rho(-vR+X) \]
for $0< v\le w<\infty$. It follows by inspection
that $V_R(v,w)=\rho(L(\alpha,p))-c_0$, and thus $V_R(v,w)$ is continuous in $v$. Moreover, $\alpha_\rho(p)$
corresponds to
\[ v_\rho(w)=\sup\left\{ v\in [0,w]\mid V_{R}(v,w)\le 0\right\} .\]
In view of Lemmas~\ref{lemUPalpha} and \ref{lemgradV}, we know
that $\partial_v V_{SH}>0$. Hence, any solution $(\hat{v},\hat{w})$
to \eqref{shoptrsreg} or \eqref{phoptrsreg} must be of the form $\hat{v}=v_\rho(\hat{w})$.

We claim that $v_\rho$ is a non-decreasing function on
$(0,\infty)$. Indeed, arguing by contradiction, suppose that
$v_\rho(w')<v_\rho(w)$ for some $w'>w$. Then $v_\rho(w)\in [0,w]$
and there exists some $v\in (v_\rho(w'),  v_\rho(w)]\neq \emptyset$
with $V_R(v,w)\le 0$. By cash-invariance of $\rho$ it follows
$V_R(v,w')=V_R(v,w)+w-w'<V_R(v,w)\le 0$, which again implies $v\le
v_\rho(w')$, a contradiction. Whence $v_\rho$ is non-decreasing.
Moreover, since $\rho$ and thus $V_R$ is convex, the function
$v_\rho:(0,\infty)\to [-\infty,\infty)$ is concave, and thus
continuous on the interior of its domain $\Dcal=\{ w\in (0,\infty)\mid v_\rho(w)>-\infty\}$ (see e.g.\ Rockafellar\
\cite[Theorems 5.3 and 10.1]{roc_70}).

By assumption, $\Dcal$ is a non-empty interval with $\sup\Dcal=\infty$. We now analyze the properties of the policy- and shareholder utility functions along the curve $(v_\rho(w),w)$ for $w\in \Dcal$. First, we claim that $V_{PH}(v_\rho(w),w)$ is strictly quasiconcave and continuous in $w\in \Dcal$. Indeed, let $  w_1<w_2<w_3$ be points in $\Dcal$. Since $v_\rho$ is concave and non-decreasing, there exists some $\lambda \in (0,1)$ such that $v_\rho(w_2) = \lambda v_\rho(w_1)+(1-\lambda) v_\rho(w_3)$ and $w_2\le \lambda w_1+(1-\lambda)w_3$. From this, and since $\partial_w V_{PH}<0$, we derive
\[ V_{PH}\left(v_\rho(w_2),w_2\right)\ge V_{PH}\left(\lambda (v_\rho(w_1),w_1)+(1-\lambda) (v_\rho(w_3),w_3)\right).\]
On the other hand, the policyholder utility function, $V_{PH}\left( a+bw,w\right)$, is strictly concave along straight lines of the form $(a+bw,w)\in\Vcal$, for constant parameters $a$ and $b$.\footnote{The function  $u\left(w_0+c_0-w-(X-w-(a+bw)R))^+\right)$ is concave in $w$, and strictly concave in $w$ for solvency states $X\le w+(a+bw) R$. Taking expectation preserves the strict concavity since $\Pa[\Scal(a+bw,w)]>0$.} Hence
\[ V_{PH}\left(\lambda (v_\rho(w_1),w_1)+(1-\lambda) (v_\rho(w_3),w_3)\right)>\min\left\{V_{PH}(v_\rho(w_1),w_1),V_{PH}(v_\rho(w_3),w_3) \right\}.\]
We thus obtain $V_{PH}\left(v_\rho(w_2),w_2\right)>\min\left\{V_{PH}(v_\rho(w_1),w_1),V_{PH}(v_\rho(w_3),w_3) \right\}$, which proves the quasiconcavity of $V_{PH}(v_\rho(w),w)$. The continuity follows from the continuity of $v_\rho$ on $\Dcal$. On the other hand, since $\partial_v V_{SH}>0$, $\partial_w V_{SH}>0$, and $v_\rho$ is concave and non-decreasing, we conclude that $V_{SH}(v_\rho(w),w)$
is a strictly increasing continuous function in $w\in
\Dcal$.

As shown in Lemma~\ref{lemUPHmax} {\ref{lemUPHmax3}}, the
policyholder's level set $\{V_{PH}\ge \gamma_{PH}\}$ is compact
in $\Vcal$ for any $\gamma_{PH}$. The theorem now follows by the afore proved properties of the policy- and shareholder utility functions along the curve $(v_\rho(w),w)$, $w\in\Dcal$.
\end{proof}

\begin{remark}\label{remconv}
We note that without convexity of $\rho$, the function $v_\rho(w)$
and thus $V_{PH}(v_\rho(w),w)$ and $V_{SH}(v_\rho(w),w)$ may fail to be continuous in $w$.
Therefore the maximum of \eqref{shoptrsreg} or \eqref{phoptrsreg} may not be attained.
\end{remark}

Theorem~\ref{thmrsreg} shows that there exists a unique solution to the risk
shifting problem under the regulatory constraints. However, while the policyholder constraint is binding at this
investment and premium policy, the shareholder's participation constraint may not. The regulatory constraint is binding for inner solutions ($0\le \hat{\alpha}<1$), but not necessarily for $\hat{\alpha}=1$. From the proof of Theorem~\ref{thmrsreg} we obtain the following corollary.

\begin{corollary}\label{corrsreg}
Assume $\Dcal_\alpha\neq\emptyset$. For any reservation utility level $\gamma_{SH}\in \Gamma^{reg}_{SH}$, there exists a unique policy $(\hat{\alpha}',\hat{p}')$ in $\Pcal$ which lies on the intersection of the shareholder level curve, $U_{SH}(\hat{\alpha}',\hat{p}')=\gamma_{SH}$, and the regulatory constraint set in \eqref{shoptrsreg}. That is,
\[ U_{SH}(\hat{\alpha}',\hat{p}')=\max\left\{U_{SH}(\alpha',\hat{p}') \mid \rho(L(\alpha',\hat{p}'))\le c_0\right\}.\]
The maximizer $(\hat{\alpha},\hat{p})$ of \eqref{shoptrsreg} satisfies $U_{PH}(\hat{\alpha},\hat{p})\ge U_{PH}(\hat{\alpha}',\hat{p}')$.
\end{corollary}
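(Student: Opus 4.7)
The plan is to recycle the $(v,w)$-parametrization of the regulatory frontier $w\mapsto (v_\rho(w),w)$, $w\in\Dcal$, that was set up in the proof of Theorem~\ref{thmrsreg}. First I translate the two defining conditions for $(\hat{\alpha}',\hat{p}')$ into this coordinate system. The shareholder-indifference condition $U_{SH}(\hat{\alpha}',\hat{p}')=\gamma_{SH}$ becomes $V_{SH}(\hat{v}',\hat{w}')=\gamma_{SH}$, while the displayed arg-max identity, combined with strict monotonicity of $U_{SH}$ in $\alpha$ (Lemma~\ref{lemUPalpha}), is equivalent to $\hat{\alpha}'=\alpha_\rho(\hat{p}')$, i.e.\ $\hat{v}'=v_\rho(\hat{w}')$. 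So any admissible candidate must lie on the regulatory curve.

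Second, I invoke the key intermediate fact established inside the proof of Theorem~\ref{thmrsreg}: the map $w\mapsto V_{SH}(v_\rho(w),w)$ is strictly increasing and continuous on $\Dcal$, and its range is $\Gamma^{reg}_{SH}$ by definition. Since $\gamma_{SH}\in\Gamma^{reg}_{SH}$, the intermediate value theorem together with strict monotonicity produces a unique $\hat{w}'\in\Dcal$ with $V_{SH}(v_\rho(\hat{w}'),\hat{w}')=\gamma_{SH}$. Setting $\hat{v}':=v_\rho(\hat{w}')$ yields the required unique $(\hat{\alpha}',\hat{p}')\in\Pcal$, which settles the first assertion.

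Third, for the comparison with the optimizer $(\hat{\alpha},\hat{p})$ of \eqref{shoptrsreg}, it is enough to observe that $(\hat{\alpha}',\hat{p}')$ is \emph{feasible} for \eqref{shoptrsreg}: its shareholder reservation constraint holds with equality by construction, and the inner regulatory arg-max constraint is satisfied because $\hat{\alpha}'=\alpha_\rho(\hat{p}')$ is, by Lemma~\ref{lemUPalpha}, exactly the utility-maximizing element of $\{\alpha'\in [0,1]\mid \rho(L(\alpha',\hat{p}'))\le c_0\}$. Since $(\hat{\alpha},\hat{p})$ maximizes $U_{PH}$ over this feasible set by Theorem~\ref{thmrsreg}, the inequality $U_{PH}(\hat{\alpha},\hat{p})\ge U_{PH}(\hat{\alpha}',\hat{p}')$ follows immediately.

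There is no genuine obstacle here: the analytic content (concavity and monotonicity of $v_\rho$, and the resulting strict monotonicity/continuity of $V_{SH}(v_\rho(w),w)$ in $w$) was already absorbed in Theorem~\ref{thmrsreg}, so the corollary is essentially a bookkeeping exercise. The only point requiring mild care is to make sure that the translation between $(\alpha,p)$- and $(v,w)$-coordinates actually preserves both the shareholder-level condition and the characterization of $\alpha_\rho(p)$ as the arg max, which is immediate from the definitions.
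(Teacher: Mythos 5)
Your proof is correct and follows exactly the route the paper intends: the corollary is stated as a direct consequence of the proof of Theorem~\ref{thmrsreg}, and you extract precisely the relevant facts from it (the arg-max constraint pins candidates to the curve $(v_\rho(w),w)$, strict monotonicity and continuity of $V_{SH}(v_\rho(w),w)$ on $\Dcal$ with range $\Gamma^{reg}_{SH}$ give existence and uniqueness, and feasibility of $(\hat{\alpha}',\hat{p}')$ for \eqref{shoptrsreg} gives the utility comparison). No gaps; this is the same argument, merely written out in full.
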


%Zuerst argumentieren wir entlang der binding shareholder participation constraint. Diese gibt uns für jedes alpha ein eindeutiges p(alpha). Wenn $U_{PH}(0,p(0))$ größer ist als $U_{PH}(1,p(1))$, dann erhöht Regulierung die Wolfahrt (sowohl wenn der Nutzen strikt fallen ist in alpha also auch wenn wir ein inneres Maximum haben). Gleiches sollte uach gelten, wenn die shareholder participation constraint nicht binding ist. Wenn nun $U_{PH}(0,p(0))$ kleinerist als $U_{PH}(1,p(1))$, dann kommt es dazu, dass Regulierung Wohlfahrt reduzieren kann. Falls der Nutzen strikt steigend ist, dann immer. Falls wir ein inneres Maximum haben, dann für kleine Werte von alpha.

We now consider the effect of solvency regulation on the agency cost
by comparing the investment and premium policies and their implied
welfare under the risk shifting problem without and with solvency
regulation. While in view of Theorems~\ref{thm1} and \ref{thmrs}, the shareholder's
reservation utility constraint is binding without solvency
regulation, it may not be binding under solvency regulation (see
Theorem~\ref{thmrsreg}). However, as indicated in Corollary~\ref{corrsreg}, it is sufficient for welfare comparison to compare the policyholder's utility as a function of
$\alpha\in [0,1]$ along the shareholder's respective level curve.

In the proof of Theorem~\ref{thm3}, we have shown that this function
can only assume three possible shapes. It is either strictly
increasing, or strictly decreasing, or attains a global maximum at a
unique critical point $\alpha^\ast\in [0,1]$ and is strictly
increasing to the left and strictly decreasing to the right of
$\alpha^\ast$.

If the policyholder's utility along the shareholder's respective
level curve is strictly increasing in $\alpha$, then the investment
policy under the risk shifting problem without solvency regulation,
$\bar{\alpha}=1$, is Pareto optimal, i.e.
$\alpha^\ast=\bar{\alpha}=1$. Solvency regulation by limiting the
investment policy to $\hat{\alpha}'$ with
\[\hat{\alpha}'< \alpha^\ast=\bar{\alpha}=1\]
reduces the policyholder's utility and thus welfare.

If the policyholder's utility along the shareholder's respective
level curve attains a global maximum at a unique critical point
$\alpha^\ast\in [0,1)$, the risk shifting
problem leads to a welfare loss. If solvency regulation
implies an investment level $\hat{\alpha}'$ which is higher than the
Pareto optimal level, i.e. if
\[ \alpha^\ast\le\hat{\alpha}'<\bar{\alpha}=1,\]
then the policyholder's utility and thus welfare is higher under the
regulatory constraint. This is because the policyholder's utility is
strictly decreasing for all $\alpha\ge\alpha^\ast$. If solvency
regulation is tighter and implies an investment level which is lower
than the Pareto optimal level, i.e. if
\[ \hat{\alpha}'< \alpha^\ast<\bar{\alpha}=1,\]
then the impact of the regulatory constraint on welfare is
ambiguous. This is because the policyholder's utility is strictly
increasing for all $\alpha\le \alpha^\ast$. In particular, for very
tight solvency regulation that restricts the set of investment
policies to very low levels, regulation can even further reduce
welfare relative to the risk shifting problem without regulation.

In view of the dual problem in Theorem~\ref{thmrsreg}, the welfare
effect of solvency regulation can be analogously discussed by
comparing the shareholder's utility as a function of $\alpha\in
[0,1]$ along the policyholder's respective level curve. Since this
function can also only assume the same three shapes (see
Remark~\ref{rem3}), we obtain the qualitatively identical results.

%More specifically, there exists a minimum threshold $\hat{\alpha}_{min}$ with $0\le\hat{\alpha}_{min}<\alpha^\ast$ such that the solvency constraint increases welfare if and only if the implied investment level $\hat{\alpha}$ satisfies $\hat{\alpha}\in[\hat{\alpha}_{min},1]$. Therefore, solvency regulation might only reduce welfare if $\hat{\alpha}_{min}>0$ and if it is very tight such the implied investment policy $\hat{\alpha}$ satisfies $\hat{\alpha}<\hat{\alpha}_{min}$.

\section{Numerical Example}\label{secnum}

In this section we first calibrate our model to the average portfolio
of an European Economic Area non-life insurer taken from the Quantitative
Impact Study 3 (QIS3) Benchmarking Study~\cite{crof} of the Chief Risk
Officer (CRO) Forum. From these data we derive all exogenous model parameters in our model.
In particular, we determine the initial
capital $c_0$ and the stochastic model for market risk $R$ and insurance risk $X$.
We then use our numerical findings to illustrate our analytical results.

The average stand alone capital requirements for stock market
investment and insurance risk under the Solvency II standard model
\cite{ceiops3,ceiops4} are
\[ {\rm SCR}_{mkt}=2,508\quad \text{and} \quad {\rm SCR}_{ins}=4,332,\]
respectively.\footnote{These figures are derived from the proportion
splits of QIS3 capital charges as shown on pages 39, 41, 43 in the
document \cite{crof}. The capital requirements are thus normalized
such that the undiversified total solvency capital requirement (SCR)
results in $100\times 100$. The risk class ``default'' is negligible
and the market risk types other than ``equity'' have been omitted
for simplicity. The final numbers can be extracted from
\cite{fil_09}, Figure 2 where ${\rm SCR}_{ins}$ is derived from the
SCR for premium risk and catastrophe risk which are assumed to be
uncorrelated.} Under Solvency II, the market investment and
insurance risk are assumed to have a linear correlation coefficient
of $0.25$. Thus, the diversified total solvency capital requirement
equals
\[ {\rm SCR}_{tot} = \sqrt{{\rm SCR}_{mkt}^2+2\times 0.25\times
{\rm SCR}_{mkt}\,{\rm SCR}_{ins}+{\rm SCR}_{ins}^2}=5,522.\]
The Solvency II risk measure $\rho$ is the value-at-risk, ${\rm
VaR}_{99.5\%}$, at the $99.5\%$ confidence level. Thus, the Solvency
II stand alone capital requirement for market risk is determined by
\begin{equation}\label{eqkmkt}
 {\rm SCR}_{mkt}={\rm VaR}_{99.5\%}\bigl[ \text{market loss}=-(c_0+p_0)\,\alpha_0\,R\bigr]
\end{equation}
for some representative premium $p_0$ to be determined below, and
the representative investment policy $\alpha_0=1/7$.\footnote{This
number is derived from annual financial statements of non-life
insurers.} The Solvency II stand alone capital requirement for
insurance risk equals
\begin{equation}\label{eqkins}
 {\rm SCR}_{ins}={\rm VaR}_{99.5\%}\bigl[ \text{insurance loss}=X-p_0\bigr].
\end{equation}
The Solvency II test demands that the available capital, $c_0$, be greater than or equal to the total solvency capital requirement, ${\rm SCR}_{tot}$. We henceforth assume that
\begin{equation}\label{eqktot}
    {\rm SCR}_{tot}=c_0.
\end{equation}
We now specify the stochastic model for market investment risk $R$
and insurance risk $X$. We assume that
\[ R=\e^{Y}-1 \quad\text{and}\quad X=\e^{Z},\]
where $(Y,Z)$ is jointly normally distributed with mean
$(\mu_Y,\mu_Z)$, standard deviations $\sigma_Y$, $\sigma_Z$, and a
linear correlation of $-0.25$.\footnote{This yields an approximate linear
correlation of $0.25$ between $-R$ and $X$.} Furthermore, we assume that
\begin{equation}\label{Ercali}
  \E[R]=0.04,\quad {\rm var}[R]=\bigl(0.16\bigr)^2.
\end{equation}
We use the following premium calculation principle to calibrate the
insurance risk parameters
\begin{equation}\label{eqpcp}
  p_0=\E[X].
\end{equation}
In solving Equations~\eqref{eqkmkt}--\eqref{eqpcp} we determine the parameters that
fully specify our model ($c_0$, $\mu_Y$, $\mu_Z$, $\sigma_Y$, $\sigma_Z$).\footnote{
For numerical reasons we normalize Equations~\eqref{eqkmkt}--\eqref{eqpcp} such that
$c_0+p_0=1$. This does not change the qualitative aspects of our numerical results.}

As for the policyholder's utility function, we assume constant
absolute risk aversion.\footnote{We choose constant absolute risk aversion since it allows for negative terminal wealth. Moreover, the risk preferences are independent of the policyholder's initial wealth $w_0$. Below we choose $w_0$ in a way that it increases numerical precision.} Thus, the policyholder's expected utility is
\[ U_{PH}(\alpha,p)=-\e^{-\beta w_0}\,\E\left[  \e^{-\beta\left(-p-(X-(c_0+p)(1+\alpha R))^+ \right)
}\right],\]
where $\beta$ denotes the coefficient of absolute risk aversion.

Our numerical results are presented in Figures~\ref{figg10}, \ref{figg30}, and \ref{figg130}
in Appendix~\ref{appfig} for different degrees of risk aversion, $\beta=10,\,30,$ and $130$,
respectively. Computation shows that Assumption~\eqref{assrs} {\ref{assrs1}} is satisfied for all three values of $\beta$. Moreover, Lemma~\ref{lemnumass} ensures that also Assumption~\eqref{assrs} {\ref{assrs2}} holds. Thus, in view of Lemmas~\ref{lemUPalpha} and \ref{lemgrad}, the shareholder's utility is increasing in $\alpha$ and $p$.
The plots cover a section of the policy space $\Pcal$ where
$\alpha$ is on the horizontal and $p$ on the vertical axis. The thin solid and thin dashed lines depict the level curves of
the shareholder (LC~Sh) and policyholder (LC~Ph) for utility levels $\gamma_{SH}$ and $\gamma_{PH}$. The policyholder utility is maximal in the south--west corner of the plots. The thick line characterizes the policy strategies $(\alpha,p)$ that
satisfy the first order condition~\eqref{foceq} for Pareto
optimality (FOC). We observe that in accordance with
Theorems~\ref{thm2} and \ref{thm3} there exists at most one Pareto
optimum for every $\alpha$ that is unique for a specific reservation
utility level.

The Pareto optimal policy under perfect competition is labeled with PC.
It lies on the shareholder's reservation utility level that is equal to his
outside option of not selling insurance, $\gamma_{SH}=c_0$. Analogously,
MO labels the Pareto optimal policies in a monopolistic insurance
market which is located on the policyholder's reservation utility level equal to
his outside option of not buying insurance, that is, $\gamma_{PH}=\E\left[u(w_0-X)\right]\in\Gamma_{PH}$ by Lemma~\ref{lemUPHmax}. In a frictionless market, depending on the level of competition the equilibrium solution lies
on the FOC between PC and MO.

The dotted and slash-dotted lines are the boundaries of the regulatory constraints
$\rho\left(L(\alpha,p)\right)\le c_0$ under the value-at-risk
measure ${\rm VaR}_{99.5\%}$ (VaR) and the expected shortfall measure
${\rm ES}_{99\%}$ (ES), respectively. The policies that are acceptable to
the regulator are to the north-west of these boundaries.
We note that the value-at-risk, ${\rm VaR}_{99.5\%}$, does not
satisfy the convexity property in Assumption~\ref{assrho} in
general, see e.g.\ McNeil et al.\ \cite{mfe}. However, in our
example it shows convex behavior for the relevant values of
$(\alpha,p)$. For comparison, we also consider the Swiss Solvency
Test \cite{sst} regulatory risk measure, which is the expected
shortfall\footnote{Also known as conditional or tail
value-at-risk.}, ${\rm ES}_{99\%}$, at the $99\%$ confidence level.
The expected shortfall satisfies all properties of Assumption~\ref{assrho}.

Figure~\ref{figg30-2} shows the regulatory constraints under the expected shortfall
measure ${\rm ES}_q$ at different confidence levels $q=99,\,90,$ and $60$. In
a frictionless market with perfect competition the optimal investment and
premium strategy $(\alpha^\ast,p^\ast)$ will be obtained at point PC.
However, if shareholders cannot credible commit to an investment strategy
the optimal solution $(\bar{\alpha},\bar{p})$ is attained at the risk shifted solution RS with $\bar{\alpha}=1$ as implied by Theorem~\ref{thmrs}. This is harmful to the policyholder since the policyholder's utility decreases
as $(\alpha,p)$ moves away from the PC along the shareholder's reservation
utility curve.
In this case regulation helps. Regulation restricts the set of
feasible premium and investment strategies and might keep the
shareholder from excessive risk taking. According to Theorem~\ref{thmrsreg} there exists a
unique solution $(\hat{\alpha},\hat{p})$ for the regulated risk shifting problem.
For different confidence levels these solutions are labeled R99, R90, and R60,
respectively.\footnote{The numerical values for $\alpha$, $p$ and $U_{PH}$ at
points of interest can be looked up in Table~\ref{tab}.} As can be seen from Figure~\ref{figg30-2}, regulation
improves efficiency under the risk shifting problem for confidence levels
$q\in[99,90,60]$. Among the
regulatory measures presented in Figure~\ref{figg30-2}, ${\rm ES}_{90\%}$ is
optimal. Solvency requirements are too tight for ${\rm ES}_{99\%}$ and too
weak for ${\rm ES}_{60\%}$.

We have chosen constant absolute risk aversion because it eliminates wealth effects from our analysis. Unfortunately, the
coefficient of absolute risk aversion is difficult to interpret since most empirical studies on
estimating risk aversion are based on the assumption of constant relative risk aversion.
Nevertheless, we can observe the behavior of the Pareto optimal
point under perfect competition.
We find that, with increasing degree of risk aversion $\beta$,
the optimal investment in the stock market reduces
while the optimal premium level increases. Moreover, in our
example, the expected shortfall measure implies a more stringent
regulatory requirement than the one implied by the value-at-risk
measure. For lower degrees of risk aversion, $\beta=10$ and
$\beta=30$, the Pareto optimal policies do
not satisfy the regulatory constraints. For higher degrees of risk
aversion, e.g. $\beta=130$, PC meets the regulatory requirements.

\section{Conclusion}\label{secconcl}

In this paper, we provide a formal framework to analyze the conflict
of interest policyholders and shareholders of insurance companies
face. Increasing the risk of the insurer's assets and liabilities
raises shareholder value potentially at the expense of
policyholders. We characterize investment strategies and premium
policies under Pareto optimality, under the risk shifting problem,
and under solvency regulation. Moreover, we analyze the effect of
solvency regulation on the agency cost of the risk shifting problem.

Solvency capital requirements limit the set of possible risk
structures and thereby provide a commitment device for shareholders
imposed by the regulator. There are other possible contractual or
organizational arrangements that serve as commitment devices and
thus reduce the agency cost. Examples include investment and
dividend policy restrictions in corporate charters, issuing
participating policies, or changing the organizational form to a
mutual insurance company. While these contractual arrangements
reduce the agency cost of the risk shifting problem they add other
trade-offs. In this paper, we took the insurance contract as given
and explored how solvency regulation might address the risk shifting
problem without distorting insurance contracts.

\begin{appendix}
\section{Appendix: Lemmas}

\begin{lemma}\label{lemvexcav}
\begin{enumerate}
    \item $U_{SH}(\alpha,p)$ is convex in $\alpha$ and in $p$.
\item $U_{PH}(\alpha,p)$ is concave in $\alpha$ and strictly concave in $p$.
\end{enumerate}
\end{lemma}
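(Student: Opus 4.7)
\textbf{Proof plan for Lemma \ref{lemvexcav}.} The whole statement reduces to checking pointwise convexity/concavity of the integrands in the respective variable and then pulling the inequality through the expectation. The key structural observation that I will exploit is that for every fixed realisation $(X,R)$ the map $(\alpha,p)\mapsto (c_0+p)(1+\alpha R)-X$ is separately affine in $\alpha$ and in $p$, and that $1+\alpha R\ge 0$ throughout $\Pcal$ because $\alpha\in[0,1]$ and $R\ge -1$.

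For part (i), fix $(X,R)$. Since $\alpha\mapsto (c_0+p)(1+\alpha R)-X$ is affine, its positive part $((c_0+p)(1+\alpha R)-X)^+$ is a convex function of $\alpha$. The same argument applied to $p\mapsto (c_0+p)(1+\alpha R)-X$ (also affine, with coefficient $1+\alpha R\ge 0$) shows convexity in $p$. Taking expectation preserves convexity, so $U_{SH}(\alpha,p)$ is convex in $\alpha$ and in $p$ separately.

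For the concavity of $U_{PH}$ in $\alpha$ I argue analogously: $\alpha\mapsto (X-(c_0+p)(1+\alpha R))^+$ is convex (positive part of an affine function), hence $\alpha\mapsto w_0-p-(X-(c_0+p)(1+\alpha R))^+$ is concave. Because $u$ is concave and non-decreasing, $u$ composed with a concave function is concave, so the integrand is concave in $\alpha$; expectation then yields concavity of $U_{PH}(\cdot,p)$.

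The only step that needs a little care is the \emph{strict} concavity of $U_{PH}$ in $p$. The same composition argument gives weak concavity. To upgrade to strict concavity, fix $\alpha\in[0,1]$ and $-c_0<p_1<p_2$, and set $p_\lambda=\lambda p_1+(1-\lambda)p_2$ for $\lambda\in(0,1)$. On the solvency event $\Scal(\alpha,p_1)=\{(c_0+p_1)(1+\alpha R)\ge X\}$, the inequality $1+\alpha R\ge 0$ forces $(c_0+p)(1+\alpha R)\ge X$ for every $p\ge p_1$, so all three points $p_1,p_\lambda,p_2$ lie in the solvency region. There the argument of $u$ equals the strictly decreasing affine function $p\mapsto w_0-p$, and strict concavity of $u$ (from $u''<0$) gives
\[ u(w_0-p_\lambda)>\lambda\,u(w_0-p_1)+(1-\lambda)\,u(w_0-p_2).\]
Since $\Pa[\Scal(\alpha,p_1)]>0$ by Assumption~\ref{ass1}, this strict pointwise inequality survives integration against the density $f$, while weak concavity holds on the complement, giving strict concavity of $U_{PH}(\alpha,\cdot)$. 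The main (small) obstacle is exactly this upgrade from weak to strict concavity, which is why the solvency assumption $\Pa[\Scal(\alpha,p)]>0$ is essential.
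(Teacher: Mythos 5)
Your proof is correct and follows essentially the same route as the paper's: pointwise convexity/concavity of the integrands (positive parts of affine maps, and $u$ concave increasing composed with a concave map), with the strict concavity in $p$ obtained from the solvency event having positive probability. Your treatment of the strict-concavity upgrade is in fact slightly more careful than the paper's one-line version, since you verify that the solvency region is monotone in $p$ so that the integrand genuinely equals the strictly concave function $u(w_0-p)$ on a fixed event of positive probability.
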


\begin{proof}
For fixed $R=r$ and $X=x$, the function $((c_0+p)(1+\alpha R)-X)^+$
is convex in $p$ and in $\alpha$. Moreover, $u\left(w_0-p-(X-(c_0+p)(1+\alpha R))^+\right)$ is
concave in $p$ and in $\alpha$, and strictly concave in $p$ for solvency states $X\le(c_0+p)(1+\alpha R)$. Taking expectation preserves these properties since $\Pa[\Scal(\alpha,p)]>0$.
\end{proof}

\begin{lemma}\label{lemgrad}
The derivatives of $U_{SH}$ and $U_{PH}$ are given by:
\begin{align*}
  \partial_\alpha U_{SH}(\alpha,p)&=(c_0+p) \E\left[R\,1_{\Scal(\alpha,p)}
  \right]\\
\partial_p U_{SH}(\alpha,p)&=  \E\left[(1+\alpha R)\,1_{\Scal(\alpha,p)}
  \right]>0\\
\partial_\alpha U_{PH}(\alpha,p)&=(c_0+p) \E\left[R\,u'(w_0-p-X+(c_0+p)(1+\alpha
R))\,1_{{\Scal(\alpha,p)}^c}\right]\\
\partial_p U_{PH}(\alpha,p)&=-u'(w_0-p)\Pa \left[{\Scal(\alpha,p)}
  \right]\notag\\
  &\quad+\alpha\E\left[R\,u'(w_0-p-X+(c_0+p)(1+\alpha
R))\,1_{{\Scal(\alpha,p)}^c}\right]
\end{align*}
for all $(\alpha,p)\in \Pcal$.
\end{lemma}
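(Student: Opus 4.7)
The strategy is to differentiate under the expectation, which Assumption~\ref{ass1}(iv) permits, and to exploit the fact that the joint density $f(x,r)$ is continuous so that the boundary $\partial\Scal(\alpha,p)=\{(c_0+p)(1+\alpha R)=X\}$ has zero probability. All four formulas follow by routine application of the Leibniz rule; no genuinely hard step remains beyond this observation.

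First I would handle $U_{SH}$ by writing $((c_0+p)(1+\alpha R)-X)^+=((c_0+p)(1+\alpha R)-X)\,1_{\Scal(\alpha,p)}$ and differentiating pointwise off the null boundary set. Since $\partial_\alpha[(c_0+p)(1+\alpha R)-X]=(c_0+p)R$ and $\partial_p[(c_0+p)(1+\alpha R)-X]=1+\alpha R$, and the indicator $1_{\Scal(\alpha,p)}$ contributes only a boundary term that vanishes because the integrand itself is zero on $\partial\Scal(\alpha,p)$, interchanging differentiation and expectation delivers the first two identities. For the strict inequality $\partial_p U_{SH}>0$ I would invoke $R\ge -1$ from Assumption~\ref{ass1}\ref{ass1fxr} to obtain $1+\alpha R\ge 1-\alpha\ge 0$ for $\alpha\in[0,1]$. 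When $\alpha<1$ the bound $1+\alpha R\ge 1-\alpha>0$ together with $\Pa[\Scal(\alpha,p)]>0$ closes the case. When $\alpha=1$ the zero set of $1+R$ intersected with $\Scal(1,p)$ is contained in $\{R=-1\}\cap\{X=0\}$, which is a null set by continuity of $f$, so $1+R>0$ almost surely on $\Scal(1,p)$ and the expectation is again strictly positive.

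For $U_{PH}$ I would split along the solvency event: on $\Scal(\alpha,p)$ the policyholder consumes $w_0-p$, while on $\Scal(\alpha,p)^c$ he consumes $w_0-p-X+(c_0+p)(1+\alpha R)$. Thus
\[ U_{PH}(\alpha,p)=u(w_0-p)\,\Pa[\Scal(\alpha,p)]+\E\left[u(w_0-p-X+(c_0+p)(1+\alpha R))\,1_{\Scal(\alpha,p)^c}\right]. \]
Differentiating each piece using the chain rule, and noting once more that the indicator's boundary contribution cancels since the two candidate terminal wealth expressions agree on $\partial\Scal(\alpha,p)$ (so $u$ is continuous across the boundary), yields the claimed expressions for $\partial_\alpha U_{PH}$ and $\partial_p U_{PH}$.

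The main obstacle is purely bookkeeping: ensuring that the boundary terms produced by differentiating $(\cdot)^+$ and $1_{\Scal(\alpha,p)}$ really do cancel. Both cancellations reduce to the same observation, namely that the integrand is continuous across $\partial\Scal(\alpha,p)$ and that this hypersurface has probability zero, which is immediate from Assumption~\ref{ass1}\ref{ass1fxr}. The dominated convergence needed to justify the interchange is subsumed in the standing hypothesis Assumption~\ref{ass1}(iv).
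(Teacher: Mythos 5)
Your proposal is correct and follows essentially the same route as the paper: decompose the expectations over the solvency event and its complement, differentiate under the integral as licensed by Assumption~\ref{ass1}(iv), and observe that the boundary contributions vanish (for $U_{SH}$ because the integrand is zero on $\partial\Scal(\alpha,p)$, for $U_{PH}$ because the two branches agree there), with the positivity of $\partial_p U_{SH}$ coming from $R\ge -1$ and $\Pa[\Scal(\alpha,p)]>0$ exactly as in the paper's appeal to Assumption~\ref{ass1}\ref{ass1fxr}.
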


\begin{proof}
We can write
\begin{align*}
  U_{SH}(\alpha,p)&=\int_{-\infty}^{\infty}\int_{-\infty}^{(c_0+p)(1+\alpha r)}\left((c_0+p)(1+\alpha
r)-x\right) f(x,r)\,dx\,dr\\
 U_{PH}(\alpha,p)&=\int_{-\infty}^{\infty}\int_{-\infty}^{(c_0+p)(1+\alpha r)} u(w_0-p) f(x,r)\,dx\,dr\\
 &\quad +
\int_{-\infty}^{\infty}\int_{(c_0+p)(1+\alpha r)}^{\infty} u(
w_0-p-x+(c_0+p)(1+\alpha r)) f(x,r)\,dx\,dr
\end{align*}
Hence the assertion follows by straightforward formal
differentiation as justified by Assumption~\ref{ass1}. That
$\partial_p U_{SH}(\alpha,p)>0$ follows from
Assumption~\ref{ass1}\ref{ass1fxr}.
\end{proof}

\begin{lemma}\label{lemUPHmax}
\begin{enumerate}

  \item\label{lemUPHmax1} $U_{SH}(\alpha,-c_0)\equiv 0$ and $\lim_{p\to\infty} \inf_{\alpha\in[0,1]} U_{SH}(\alpha,p)=\infty$. Hence the range is $U_{SH}(\Pcal)=(0,\infty)$.

\item\label{lemUPHmax2} For every level $\gamma_{SH}\in (0,\infty)$, the level curve $\{U_{SH}=\gamma_{SH}\}$ is a $C^1$-line in $\Pcal$ connecting the $\{\alpha=0\}$- and $\{\alpha=1\}$-axes.

\item\label{lemUPHmax3} $U_{PH}(\alpha,-c_0)\equiv \E\left[u(w_0+c_0-X)\right]$ and $\lim_{p\to\infty} \sup_{\alpha\in[0,1]}U_{PH}(\alpha,p)=-\infty$.

\item\label{lemUPHmax4} $U_{PH}$ attains its global supremum at $(1, p^{crit}_{PH})$ . That is,
    \begin{equation}\label{UPHmaxi}
   U_{PH}(1, p^{crit}_{PH})=\sup_{(\alpha,p)\in  {\Pcal}} U_{PH}(\alpha,p) .
    \end{equation}
Moreover, $p^{crit}_{PH}>-c_0$ if and only if $\E\left[R\,u'(w_0+c_0-X)\right]>0$.\footnote{A sufficient condition for this to hold is $\E\left[ R\mid X=x\right]>0$ for all $ {\rm ess \,inf}\, X<x<{\rm ess \,sup}\, X$.} In this case, $(1, p^{crit}_{PH})$ is the unique maximizer for $U_{PH}$ in $\Pcal$. Hence the range is $U_{PH}(\Pcal)=\Gamma_{PH}$.

\item\label{lemUPHmax5}  For every level $\gamma_{PH}\in \Gamma_{PH}$, the level curve $\{U_{PH}=\gamma_{PH}\}$ is a $C^1$-line in $\Pcal$, connecting the $\{\alpha=0\}$- and $\{\alpha=1\}$-axes if $\gamma_{PH}\le \E\left[u(w_0+c_0-X)\right]$.
\end{enumerate}
\end{lemma}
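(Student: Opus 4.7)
Parts \ref{lemUPHmax1}--\ref{lemUPHmax3} are direct. For \ref{lemUPHmax1}, the identity $U_{SH}(\alpha,-c_0)\equiv 0$ is immediate from the formula (the payoff vanishes when $c_0+p=0$ and $X\ge 0$), and a uniform lower bound comes from
\[ U_{SH}(\alpha,p)\ge \E\bigl[(c_0+p)(1+\alpha R)-X\bigr]=(c_0+p)(1+\alpha\E[R])-\E[X], \]
combined with $1+\alpha\E[R]\ge\min\{1,1+\E[R]\}>0$ under Assumption~\ref{ass1}\ref{ass1fxr}; together with strict positivity of $U_{SH}$ on $\Pcal$ (since on the solvency event the payoff is strictly positive with positive probability) and $p$-continuity, this yields $U_{SH}(\Pcal)=(0,\infty)$. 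Part \ref{lemUPHmax2} follows from the implicit function theorem: $\partial_p U_{SH}>0$ (Lemma~\ref{lemgrad}) and for each fixed $\alpha$ the map $p\mapsto U_{SH}(\alpha,p)$ is a continuous strictly increasing bijection $(-c_0,\infty)\to(0,\infty)$, so the IVT produces a unique $p=P(\alpha)$, and IFT makes $P$ a $C^1$-function on $[0,1]$. For \ref{lemUPHmax3}, at $p=-c_0$ one has $\Pa[\Scal(\alpha,-c_0)]=\Pa[X\le 0]=0$ since $X$ has a continuous density on $\R_+$, so almost surely the wealth is $w_0+c_0-X$; the uniform limit as $p\to\infty$ follows from the pointwise bound $U_{PH}(\alpha,p)\le u(w_0-p)\to-\infty$.

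The main content is part \ref{lemUPHmax4}. The key observation is that every inner policy is strictly dominated for the policyholder by an $\alpha=1$ policy with suitably reduced premium. Given $(\alpha,p)\in\Pcal$ with $\alpha<1$, I set
\[ p'=\alpha p-(1-\alpha)c_0,\qquad K=(c_0+p)(1-\alpha)>0, \]
so that $p'\in(-c_0,\infty)$, $p-p'=K$, and $(c_0+p')(1+R)=(c_0+p)(1+\alpha R)-K$. Writing $Z=X-(c_0+p')(1+R)$, a direct computation gives the pointwise identity
\[ \bigl(w_0-p-(X-(c_0+p)(1+\alpha R))^+\bigr)-\bigl(w_0-p'-Z^+\bigr)=Z^+-(Z-K)^+-K, \]
and splitting into the cases $Z\le 0$, $0<Z<K$, $Z\ge K$ shows the right-hand side is $\le 0$ with equality precisely on $\{Z\ge K\}=\Scal(\alpha,p)^c$. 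Since $\Pa[\Scal(\alpha,p)]>0$ and $u$ is strictly increasing, this gives $U_{PH}(\alpha,p)<U_{PH}(1,p')$. Hence the global supremum of $U_{PH}$ over $\Pcal$ equals $\sup_{p>-c_0}U_{PH}(1,p)$; by strict concavity of $U_{PH}(1,\cdot)$ (Lemma~\ref{lemvexcav}), the finite boundary value $\E[u(w_0+c_0-X)]$, and the limit $U_{PH}(1,p)\to-\infty$, this sup is attained at a unique $p^{crit}_{PH}\in[-c_0,\infty)$. When $p^{crit}_{PH}>-c_0$, uniqueness of the global maximizer in $\Pcal$ follows from the strict inequality for $\alpha<1$ combined with strict $p$-concavity. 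The characterization $p^{crit}_{PH}>-c_0 \Leftrightarrow \E[R\,u'(w_0+c_0-X)]>0$ follows by strict concavity from the sign of $\partial_p U_{PH}(1,\cdot)$ at $p=-c_0$, which by Lemma~\ref{lemgrad} (using $\Pa[\Scal(1,-c_0)]=0$) reduces exactly to that expectation. The range $U_{PH}(\Pcal)=\Gamma_{PH}$ then follows from continuity and \ref{lemUPHmax3}.

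For part \ref{lemUPHmax5}, fix $\gamma_{PH}\le\E[u(w_0+c_0-X)]$ in $\Gamma_{PH}$. For every $\alpha\in[0,1]$, $p\mapsto U_{PH}(\alpha,p)$ is strictly concave on $[-c_0,\infty)$ (Lemma~\ref{lemvexcav}), starts at $\E[u(w_0+c_0-X)]\ge\gamma_{PH}$, and decreases to $-\infty$. Consequently it crosses $\gamma_{PH}$ exactly once on the (eventual) strictly decreasing branch, at some $p_\gamma(\alpha)>-c_0$ where $\partial_p U_{PH}(\alpha,p_\gamma(\alpha))<0$. The implicit function theorem then makes $p_\gamma$ a $C^1$-function on $[0,1]$, and the level curve is its graph, joining the two $\alpha$-boundaries. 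The main subtlety, which I expect to be the only delicate point in \ref{lemUPHmax5}, is selecting the correct branch of the non-monotone function $U_{PH}(\alpha,\cdot)$; the assumption $\gamma_{PH}\le\E[u(w_0+c_0-X)]$ is exactly what places the crossing unambiguously on the strictly decreasing branch, guaranteeing both uniqueness and the sign of $\partial_p U_{PH}$ required by IFT.
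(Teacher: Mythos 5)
Your treatment of parts (i)--(iii), the characterization $p^{crit}_{PH}>-c_0 \Leftrightarrow \E\left[R\,u'(w_0+c_0-X)\right]>0$ via the sign of $\partial_p U_{PH}(1,\cdot)$ at $p=-c_0$, and the level-curve construction for part (ii) all coincide with the paper's proof. The genuinely different step is your proof of \eqref{UPHmaxi}. The paper argues indirectly: by Lemma~\ref{lemgrad}, $\partial_\alpha U_{PH}=0$ forces $\partial_p U_{PH}=-u'(w_0-p)\Pa\left[\Scal\right]<0$, so $U_{PH}$ has no interior critical point, and the supremum is then located by inspecting the boundary pieces of $\Pcal$ (the $\{\alpha=0\}$-edge is decreasing, the $\{p=-c_0\}$-edge is constant). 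You instead exhibit, for each $(\alpha,p)$ with $\alpha<1$, an explicit strictly dominating policy $(1,p')$ with $c_0+p'=\alpha(c_0+p)$, verified through the pointwise identity $Z^+-(Z-K)^+-K\le 0$ with strictness on $\{X<(c_0+p)(1+\alpha R)\}$, which has positive probability. The computation checks out. In the $(v,w)$-coordinates of the proof of Theorem~\ref{thm2} your move is exactly the slide from $(v,w)$ to $(v,v)$ at constant $v$, i.e.\ an integrated form of $\partial_w V_{PH}<0$; so it is consistent with the paper's machinery but constructive and free of the compactness-plus-boundary case analysis. One small repair: for $\alpha=0$ your $p'$ equals $-c_0$, which lies outside $\Pcal$, so your assertion $p'\in(-c_0,\infty)$ fails there; this is harmless because $U_{PH}(1,\cdot)$ extends continuously to $p=-c_0$ and $p^{crit}_{PH}$ is defined as the maximizer over the closed interval $[-c_0,\infty)$, but it should be said.

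The one genuine gap is in part (v). The statement asserts that $\{U_{PH}=\gamma_{PH}\}$ is a $C^1$-line for \emph{every} $\gamma_{PH}\in\Gamma_{PH}$, with the connection to both axes claimed only when $\gamma_{PH}\le\E\left[u(w_0+c_0-X)\right]$. You prove only that conditional case and never address levels $\gamma_{PH}>\E\left[u(w_0+c_0-X)\right]$, which occur whenever $p^{crit}_{PH}>-c_0$. For such levels the vertical slice $U_{PH}(\alpha,\cdot)$ may cross $\gamma_{PH}$ twice or not at all, so the graph-over-$[0,1]$ construction breaks down; what is needed is the paper's observation that $\nabla U_{PH}$ is nowhere zero on $\Pcal$ (if $\partial_p U_{PH}=0$ then $\alpha\E\left[R\,u'(\cdot)\,1_{\Scal^c}\right]=u'(w_0-p)\Pa\left[\Scal\right]>0$, hence $\partial_\alpha U_{PH}>0$), after which the implicit function theorem describes the level set locally everywhere. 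Because your route to part (iv) bypasses this gradient computation, the nonvanishing of $\nabla U_{PH}$ is never established in your write-up and must be added. (A minor edge case you share with the paper: at $\gamma_{PH}=\E\left[u(w_0+c_0-X)\right]$ exactly, $U_{PH}(0,p)<\gamma_{PH}$ for all $p>-c_0$, so the level curve meets the $\{\alpha=0\}$-axis only in the limit $p\downarrow -c_0$.)
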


\begin{proof}
Most of the stated properties follow straightforward from the definition of $U_{SH}$ and $U_{PH}$, and Lemmas~\ref{lemvexcav} and \ref{lemgrad}. We only give details for some of the statements.

Jensen's inequality implies $U_{SH}(\alpha,p)\ge (c_0+p)(1+\alpha\E[R])\ge (c_0+p)\min\{1,1+\E[R]\}$. Since $\E[R]>-1$ we conclude that $\lim_{p\to\infty} \inf_{\alpha\in[0,1]} U_{SH}(\alpha,p)=\infty$, which proves {\ref{lemUPHmax1}}.

Next, note that $U_{PH}(\alpha,p)\le \E[u(w_0-p)]$, hence $\lim_{p\to\infty}\sup_{\alpha \in [0,1]} U_{PH}(\alpha,p)=-\infty$, and $U_{PH}$
is uniformly bounded on $\Pcal$ in particular, whence {\ref{lemUPHmax3}}.

In view of Lemma~\ref{lemgrad},
$U_{PH}$ cannot have a critical point in the interior of $\Pcal$,
and $U_{PH}(0,p)$ is strictly decreasing in $p$. On the other hand,
$U_{PH}(\alpha,-c_0)\equiv \E[u(w_0+c_0-X)]$ is constant in
$\alpha\in [0,1]$. We conclude that $U_{PH}(\alpha,p)$ attains its
maximum in $(1, p^{crit}_{PH})$, which proves \eqref{UPHmaxi}. Next we note that $\Pa[\Scal(\alpha,-c_0)]=\Pa[ X\le 0]=0$. Hence Lemma~\ref{lemgrad} implies that $\partial_p U_{PH}(\alpha,-c_0)=\alpha\E\left[R\,u'(w_0+c_0-X)\right]$, and thus $p^{crit}_{PH}>-c_0$ if and only if $\E\left[R\,u'(w_0+c_0-X)\right]>0$, whence {\ref{lemUPHmax4}}.

The smoothness properties {\ref{lemUPHmax2}} and {\ref{lemUPHmax5}} of the level curves follows by the implicit function theorem and the fact that $\nabla U_{SH}$ and $\nabla U_{PH}$ are nowhere zero in $\Pcal$. Finally, $U_{PH}(0,p)$ and $U_{PH}(1,p)$ are strictly decreasing in $p\in (-c_0,\infty)$ and $p\in (p^{crit}_{PH},\infty)$, respectively, with $U_{PH}(0,-c_0)=\E\left[u(w_0+c_0-X)\right]\le \gamma^{crit}_{PH}=U_{PH}(1,p^{crit}_{PH})$. Whence level curve $\{U_{PH}=\gamma_{PH}\}$ connects the $\{\alpha=0\}$- and $\{\alpha=1\}$-axes if $\gamma_{PH}\le \E\left[u(w_0+c_0-X)\right]$. The statement on shareholder level curves follows similarly.
\end{proof}

\begin{lemma}\label{lemgradV}
The derivatives of $V_{SH}$ and $V_{PH}$ defined in
\eqref{vwnotation} are given by:
\begin{align*}
  \partial_v V_{SH}(v,w)&=  \E\left[R\,1_{\Scal(v,w)}
  \right]\\
\partial_w V_{SH}(v,w)&=  \Pa \left[{\Scal(v,w)}
  \right]>0\\
\partial_v V_{PH}(v,w)&=  \E\left[R\,u'(w_0+c_0-X+vR)\,1_{{\Scal(v,w)}^c}\right]\\
\partial_w V_{PH}(v,w)&=-u'(w_0+c_0-w)\Pa \left[{\Scal(v,w)} \right]<0
\end{align*}
for all $0\le v\le w$, $w\in [c_0,\infty)$.
\end{lemma}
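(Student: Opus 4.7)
The plan is to compute the partial derivatives of $V_{SH}$ and $V_{PH}$ by rewriting the expectations as iterated integrals against the joint density $f(x,r)$ and then differentiating under the integral sign, in direct analogy with the proof of Lemma~\ref{lemgrad}. The key observation that makes every calculation clean is that the solvency boundary $\{x = w+vr\}$ enters both as a limit of integration and as a point where the integrands are continuous, so that all Leibniz boundary terms vanish.

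Concretely, I would first write
\[
V_{SH}(v,w)=\int_{-\infty}^{\infty}\int_{-\infty}^{w+vr}(w+vr-x)\,f(x,r)\,dx\,dr,
\]
and similarly split $V_{PH}(v,w)$ into its solvency and insolvency pieces, namely
\[
V_{PH}(v,w)=\int_{-\infty}^{\infty}\int_{-\infty}^{w+vr} u(w_0+c_0-w)\,f(x,r)\,dx\,dr+\int_{-\infty}^{\infty}\int_{w+vr}^{\infty} u(w_0+c_0-x+vr)\,f(x,r)\,dx\,dr,
\]
using that $(X-w-vR)^+$ vanishes on $\Scal(v,w)$ and equals $X-w-vR$ on $\Scal(v,w)^c$. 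For $V_{SH}$, differentiating the upper limit $w+vr$ produces a boundary term proportional to $(w+vr-x)\big|_{x=w+vr}=0$, so only the $v$- or $w$-derivative of the integrand contributes, yielding $\partial_v V_{SH}=\E[R\,1_{\Scal(v,w)}]$ and $\partial_w V_{SH}=\Pa[\Scal(v,w)]$. For $V_{PH}$, the two boundary terms at $x=w+vr$ cancel because both integrands agree there with value $u(w_0+c_0-w)$; the remaining inner derivatives give $\partial_v V_{PH}=\E[R\,u'(w_0+c_0-X+vR)\,1_{\Scal(v,w)^c}]$ and $\partial_w V_{PH}=-u'(w_0+c_0-w)\,\Pa[\Scal(v,w)]$.

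Finally, the strict sign assertions $\partial_w V_{SH}>0$ and $\partial_w V_{PH}<0$ follow immediately from $\Pa[\Scal(v,w)]>0$ (Assumption~\ref{ass1}(iii)) together with $u'>0$ (Assumption~\ref{ass1}(i)). The only technical point is the legitimacy of differentiating under the integral sign and the cancellation of Leibniz boundary terms, but both are covered by Assumption~\ref{ass1}(iv), exactly as in the proof of Lemma~\ref{lemgrad}; I do not anticipate any genuine obstacle.
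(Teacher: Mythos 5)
Your computation is correct and rests on exactly the same machinery the paper uses: writing the expectations as iterated integrals against $f(x,r)$, differentiating under the integral sign as licensed by Assumption~\ref{ass1}(iv), and observing that the Leibniz boundary terms at $x=w+vr$ vanish (for $V_{SH}$) or cancel (for $V_{PH}$), with the strict signs coming from $\Pa[\Scal(v,w)]>0$ and $u'>0$. The paper's own proof is just the one-line remark that the formulas follow from Lemma~\ref{lemgrad} (via the coordinate change $v=(c_0+p)\alpha$, $w=c_0+p$ and the chain rule), so you have in effect re-derived directly in $(v,w)$-coordinates the very calculation that proves Lemma~\ref{lemgrad}; both routes are equally valid.
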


\begin{proof}
Follows from Lemma~\ref{lemgrad} and Assumption~\ref{ass1}.
\end{proof}

\begin{lemma}\label{lemnumass}
Let
\[ R=\e^{Y}-1 \quad\text{and}\quad X=\e^{Z},\]
where $(Y,Z)$ is jointly normal distributed with mean
$(\mu_Y,\mu_Z)$, standard deviations $\sigma_Y$, $\sigma_Z$, and
linear correlation coefficient $\rho_{(Y,Z)}$. If $\E[R]>0$ and
$\rho_{(Y,Z)}<0$, then
\[\E\left[ R\mid X\le c\right]>0\quad\text{for all
  $c\in (0,\infty)$.}\]
\end{lemma}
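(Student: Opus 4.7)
The plan is to reduce the conditional expectation of $R$ to a statement about the conditional moment generating function of a one-dimensional normal variable, and then exploit the sign of the correlation together with a simple stochastic ordering argument.

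First I would set $z_0 = \log c$ and write the condition $\{X \le c\}$ as $\{Z \le z_0\}$, so that it suffices to show $\E[e^Y \mid Z \le z_0] > 1$. Using the standard Gaussian decomposition, I decompose
\[ Y = a Z + W, \qquad a = \rho_{(Y,Z)}\frac{\sigma_Y}{\sigma_Z}, \]
where $W$ is normal and independent of $Z$. Because of this independence, for any event $B$ depending only on $Z$, $\E[e^Y \mid B] = \E[e^W]\,\E[e^{aZ}\mid B]$. In particular,
\[ \E[e^Y\mid Z\le z_0] = \E[e^W]\,\E[e^{aZ}\mid Z\le z_0],\qquad \E[e^Y] = \E[e^W]\,\E[e^{aZ}]. \]

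The key observation is that $\rho_{(Y,Z)}<0$ implies $a<0$, so the function $z\mapsto e^{az}$ is strictly decreasing. Since on the event $\{Z\le z_0\}$ we have $e^{aZ}\ge e^{az_0}$ with strict inequality on $\{Z<z_0\}$, and on $\{Z>z_0\}$ we have $e^{aZ}<e^{az_0}$, the total law
\[ \E[e^{aZ}] = \Pa[Z\le z_0]\,\E[e^{aZ}\mid Z\le z_0]+\Pa[Z>z_0]\,\E[e^{aZ}\mid Z>z_0] \]
together with $\E[e^{aZ}\mid Z>z_0]<e^{az_0}<\E[e^{aZ}\mid Z\le z_0]$ (both inequalities strict because $Z$ is non-degenerate and $\Pa[Z>z_0],\Pa[Z\le z_0]\in(0,1)$ for finite $z_0$) gives $\E[e^{aZ}\mid Z\le z_0] > \E[e^{aZ}]$.

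Multiplying by the positive factor $\E[e^W]$ yields
\[ \E[e^Y\mid Z\le z_0] > \E[e^Y] = \E[R]+1 > 1, \]
where the last inequality uses the assumption $\E[R]>0$. Subtracting $1$ gives $\E[R\mid X\le c]>0$, as required. I do not foresee any real obstacle here: the argument is essentially a one-line stochastic-order inequality once the conditional independence structure of bivariate normals is used to factor the conditional moment generating function.
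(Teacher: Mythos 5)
Your proof is correct. It takes a mildly but genuinely different route from the paper's. The paper writes out the conditional density of $Y$ given $Z=z$ (normal with mean linear in $z$, slope $\rho_{(Y,Z)}\sigma_Y/\sigma_Z<0$, constant variance), deduces $\partial_x\E[R\mid X=x]\le 0$, and then obtains the conclusion by contradiction via the law of total expectation: if $\E[R\mid X\le c]\le 0$ for some $c$, monotonicity forces $\E[R\mid X=x]\le 0$ for all $x\ge c$ and hence $\E[R]\le 0$. You instead package the same Gaussian structure as the orthogonal decomposition $Y=aZ+W$ with $W\perp Z$, factor the conditional moment generating function as $\E[e^W]\,\E[e^{aZ}\mid\cdot]$, and prove the strict inequality $\E[e^{aZ}\mid Z\le z_0]>\E[e^{aZ}]$ directly by splitting $\E[e^{aZ}]$ over $\{Z\le z_0\}$ and $\{Z>z_0\}$ and using that $z\mapsto e^{az}$ is strictly decreasing. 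The two arguments rest on the same fact -- the conditional expectation of $e^Y$ given $Z$ is a decreasing function of $Z$ when $\rho_{(Y,Z)}<0$ -- but yours is direct rather than by contradiction, handles strictness more transparently (you correctly note where $\Pa[Z\le z_0],\Pa[Z>z_0]\in(0,1)$ is needed), and avoids any differentiation under the integral sign. The only implicit assumptions are $\sigma_Y,\sigma_Z>0$, which are already needed for $\rho_{(Y,Z)}$ to be defined, so there is no gap.
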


\begin{proof}
The conditional distribution of $Y$ given $Z$ is normally
distributed with mean $\mu_{Y\mid
Z=z}=\mu_Y+\rho_{(Y,Z)}\frac{\sigma_Y}{\sigma_Z}( z-\mu_Z) $ and
variance $\sigma^2_{Y\mid Z=z}=\sigma_Y^2( 1-\rho_{(Y,Z)}^2)$, see
e.g.\ McNeil et al.\ \cite[p. 68]{mfe}. Hence
\[\E[R\mid X=x]=\int_{-\infty}^{\infty}\left( \e^y - 1\right)\phi_{\mu_{Y\mid Z=\ln (x)}, \sigma^2_{Y\mid Z=\ln(x)}} (y) dy\]
where $\phi_{\mu,\sigma}$ denotes the normal density function with
mean $\mu$ and variance $\sigma^2$. Since $(e^y-1)$ is strictly
increasing in $y$, $\partial_x\mu_{Y\mid Z=\ln (x)}\le 0$, and
$\partial_x\sigma^2_{Y\mid Z=\ln (x)}= 0$, we infer that
$\partial_x\E[R\mid X=x]\le 0$. Suppose, by contradiction, there
exists some $c \in (0,\infty)$ with $\E[R \mid X \le c] \le 0$. Then
$\partial_x\E[R\mid X=x]\le 0$ implies that $\E[R\mid X=c]\le 0$ and
thus $\E[R\mid X=x]\le 0$ for all $x \ge c$. Hence $\E[R]=\E[R \mid
X \le c]\Pa[X \le c] + \E[R \mid X > c]\Pa[X>c] \le 0$ contradicts
$\E[R]>0$.
\end{proof}

\end{appendix}

\pagebreak{}
\section{Appendix: Figures and Table}\label{appfig}
\suppressfloats[] \vfill
\begin{figure}[h]
\begin{centering}
\includegraphics[bb=50bp 200bp 562bp
592bp,clip,width=1\textwidth]{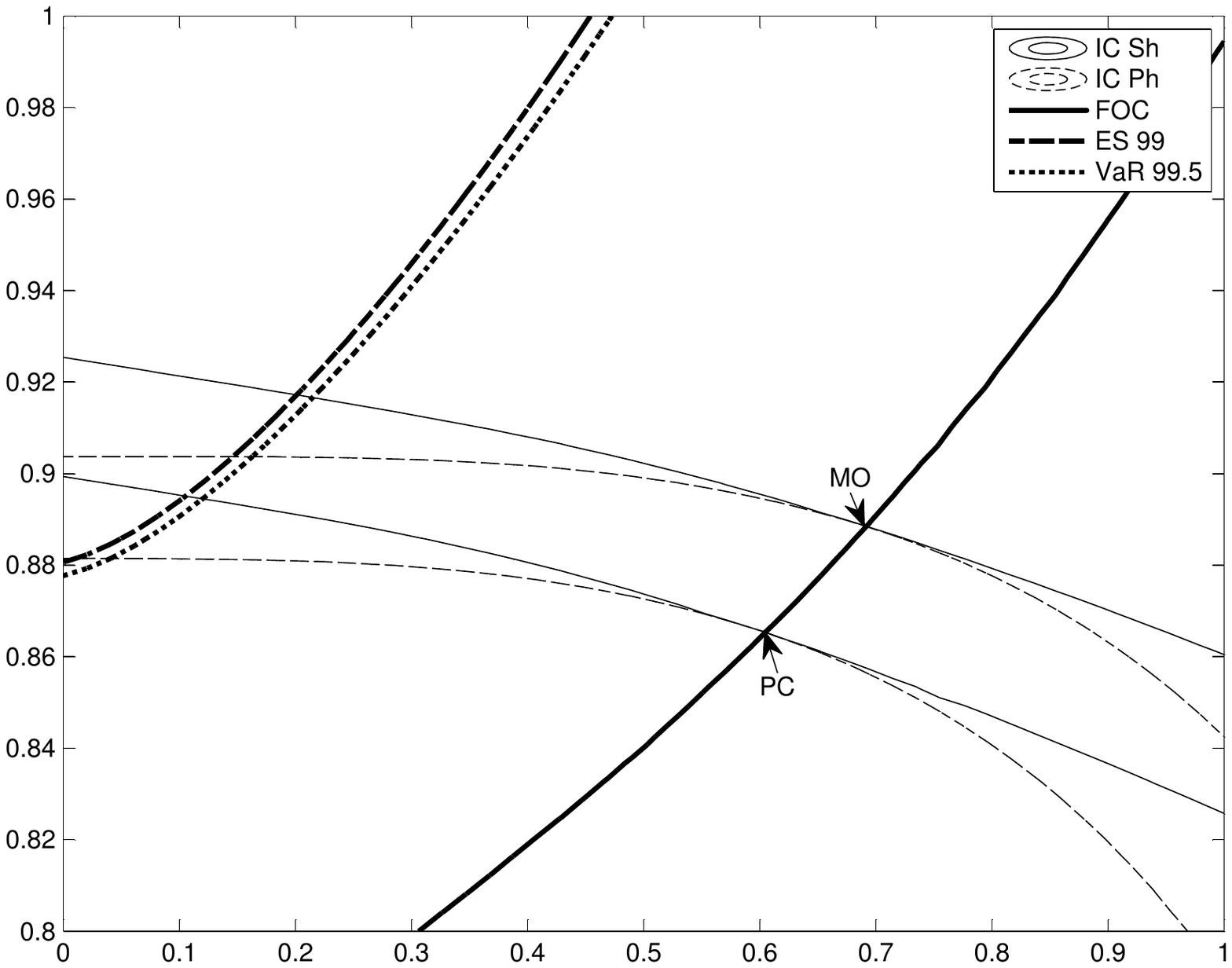}\caption{This Figure
shows the numerical results for a coefficient of absolute risk
aversion $\beta=10$ in the $\left(\alpha,p\right)$-space. The thin
solid lines represent the shareholder's level curve (LC~Sh).
The thin dashed lines represent the policyholder's level
curve (LC~Ph). The contract curve (FOC) is depicted by the thick
solid line. The points PC and MO denote the Pareto optimal policies
in a perfectly competitive and in a monopolistic market,
respectively. The thick dotted and slash dotted lines determine the
set of feasible policies under the ES$_{99\%}$-measure (ES) and the
VaR$_{99.5\%}$-measure (VaR).}\label{figg10}
\end{centering}
\end{figure}
\vfill

\begin{figure}[H]
\begin{centering}
\includegraphics[bb=50bp 200bp 562bp
592bp,clip,width=1\textwidth]{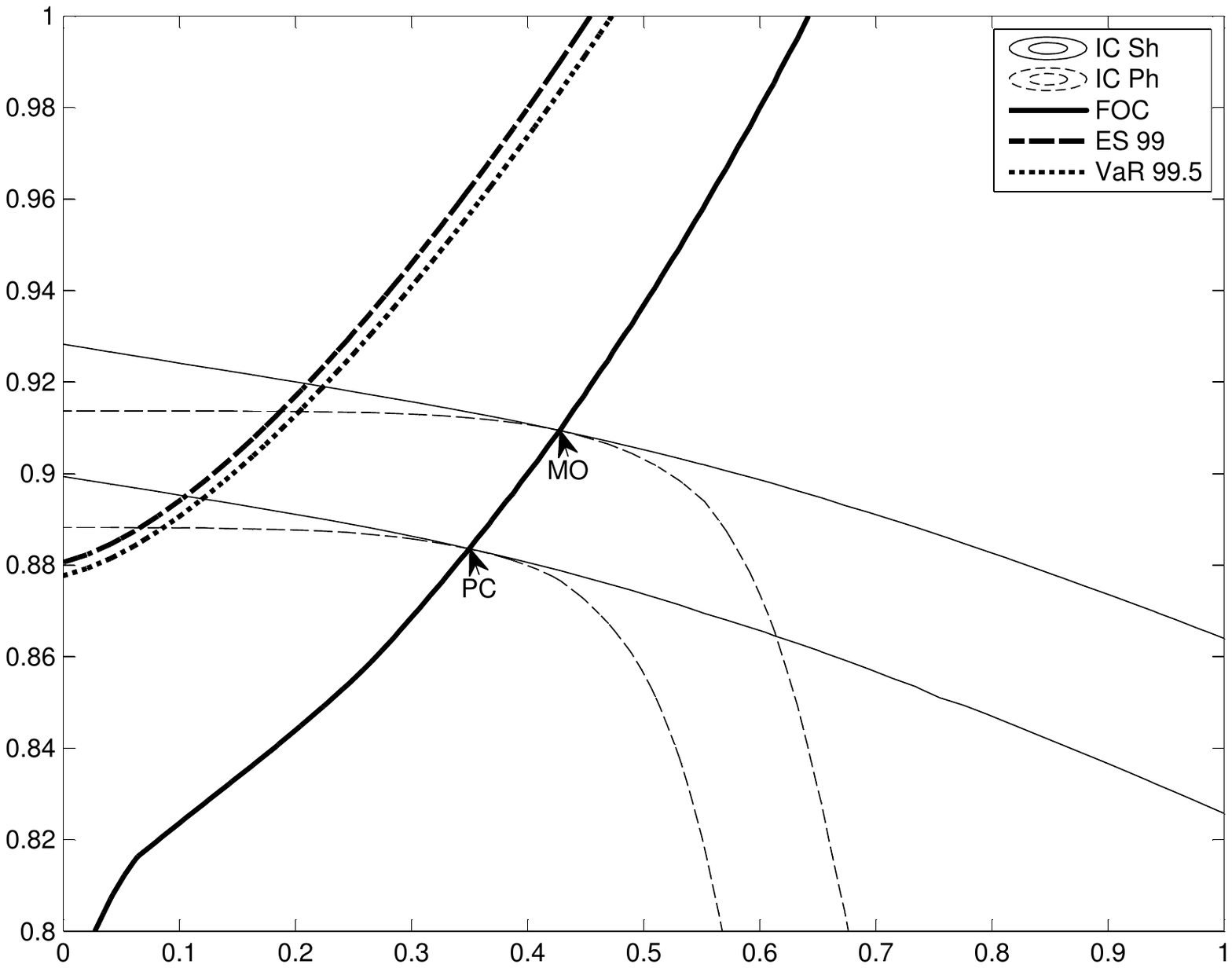}\caption{This Figure
shows the numerical results for a coefficient of absolute risk
aversion $\beta=30$ in the $\left(\alpha,p\right)$-space. The thin
solid lines represent the shareholder's level curve (LC~Sh).
The thin dashed lines represent the policyholder's level
curve (LC~Ph). The contract curve (FOC) is depicted by the thick
solid line. The points PC and MO denote the Pareto optimal policies
in a perfectly competitive and in a monopolistic market,
respectively. The thick dotted and slash dotted lines determine the
set of feasible policies under the ES$_{99\%}$-measure (ES) and the
VaR$_{99.5\%}$-measure (VaR).}\label{figg30}
\end{centering}
\end{figure}

\begin{figure}[H]
\begin{centering}
\includegraphics[bb=50bp 200bp 562bp
592bp,clip,width=1\textwidth]{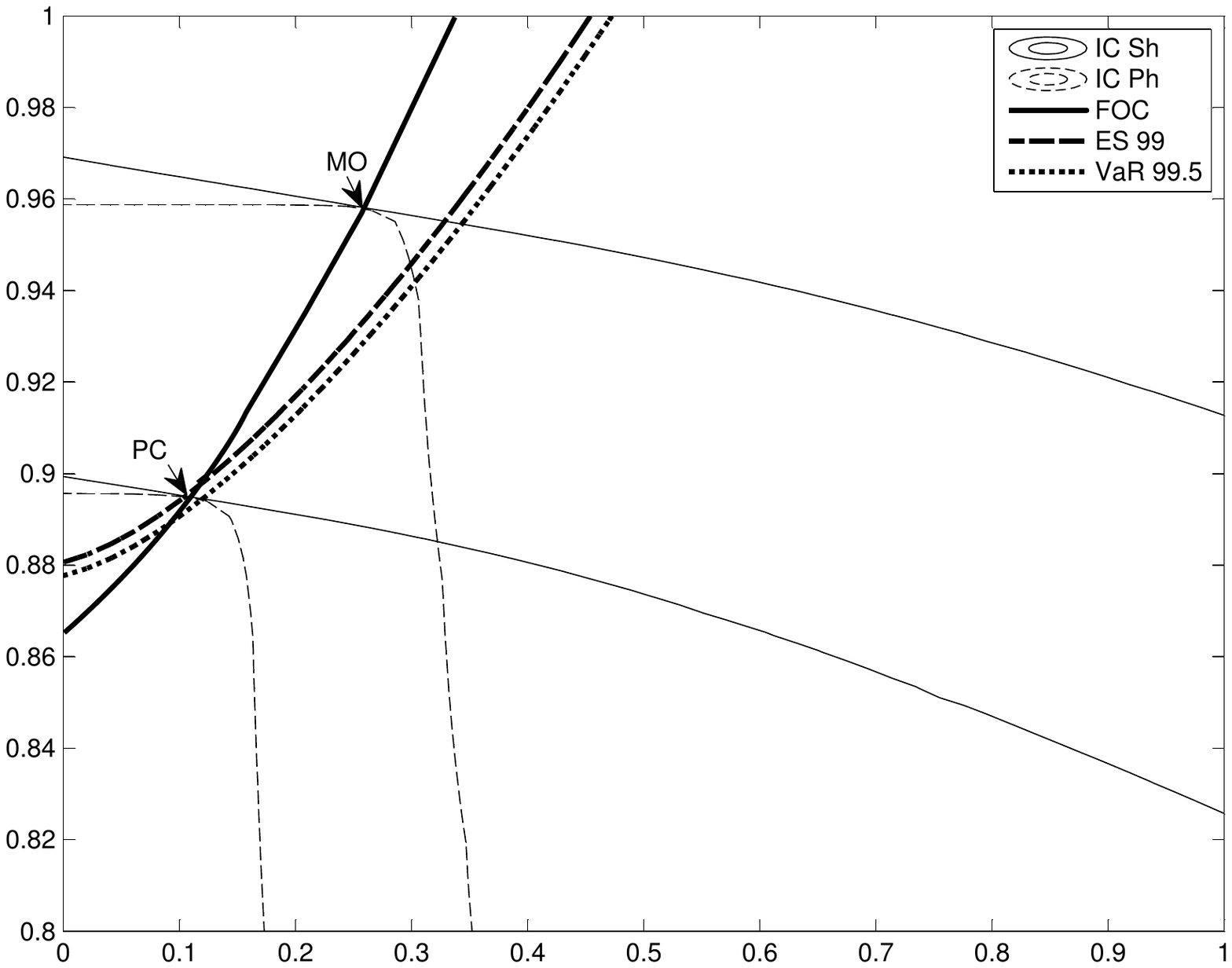}\caption{This Figure
shows the numerical results for a coefficient of absolute risk
aversion $\beta=130$ in the $\left(\alpha,p\right)$-space. The thin
solid lines represent the shareholder's level curve (LC~Sh).
The thin dashed lines represent the policyholder's level
curve (LC~Ph). The contract curve (FOC) is depicted by the thick
solid line. The points PC and MO denote the Pareto optimal policies
in a perfectly competitive and in a monopolistic market,
respectively. The thick dotted and slash dotted lines determine the
set of feasible policies under the ES$_{99\%}$-measure (ES) and the
VaR$_{99.5\%}$-measure (VaR).}\label{figg130}
\end{centering}
\end{figure}

\clearpage

\begin{figure}[H]
\begin{centering}
\includegraphics[bb=50bp 200bp 562bp
592bp,clip,width=1\textwidth]{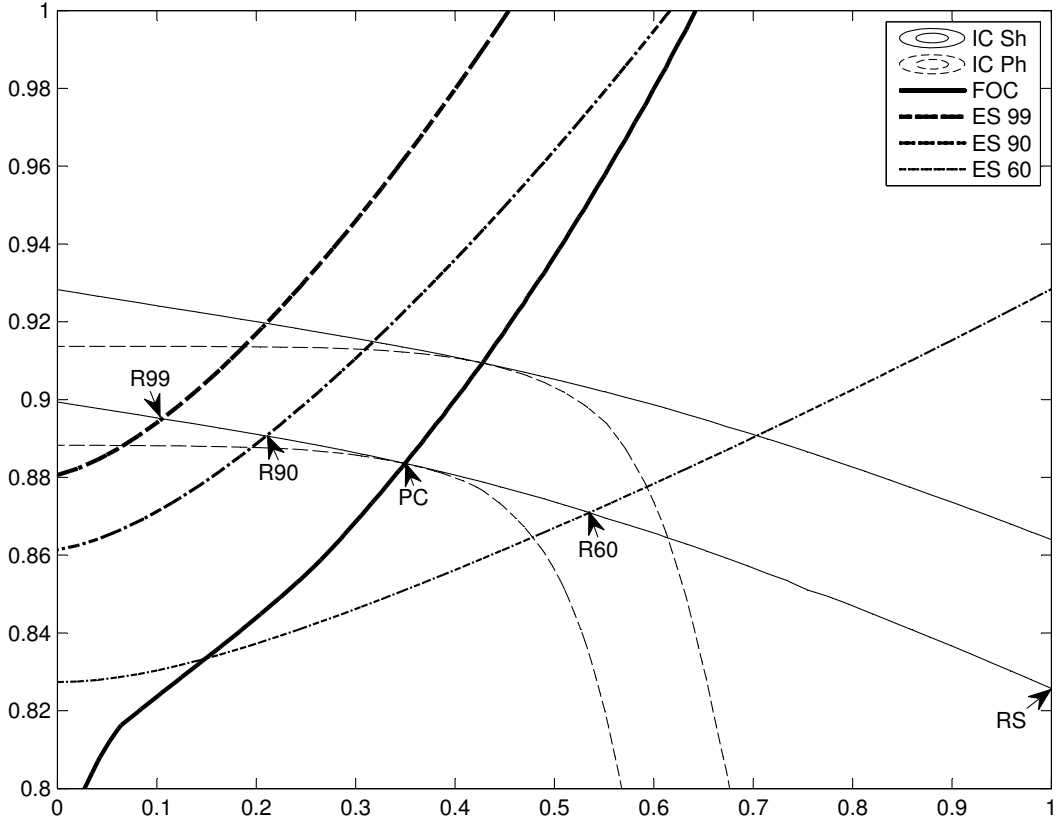}\caption{This Figure
shows the regulatory constraints for a coefficient of absolute risk
aversion $\beta=30$ under the expected shortfall measure ${\rm
ES}_q$ at different confidence levels $q=99,\,90,$ and $60$ for the
case of perfect competition. That is, on the shareholder level curve $\gamma_{SH}=c_0$. In
this setting the Pareto optimal policy $(\alpha^\ast,p^\ast)$ is
labeled with PC, the risk shifted solution is denoted by RS, and the
solutions to the regulated risk shifting problem for different
confidence levels are marked with R99, R90, and R60,
respectively.}\label{figg30-2}
\end{centering}
\end{figure}

\begin{table}[H]
\center
\begin{tabular}{|l|c|c|c|}
  \hline
  % after \\: \hline or \cline{col1-col2} \cline{col3-col4} ...
    & $\alpha$ & $p$ & $U_{PH}(\alpha,p)$ \\ \hline
  PC $=(\alpha^\ast,p^\ast)$ & 0.347 & 0.883 & -1.00E-06 \\ \hline
  R99 $=(\hat{\alpha}_{{\rm ES}_{99\%}},\hat{p}_{{\rm ES}_{99\%}})$ & 0.102 & 0.893 & -1.14E-06 \\ \hline
  R90 $=(\hat{\alpha}_{{\rm ES}_{90\%}},\hat{p}_{{\rm ES}_{90\%}})$ & 0.204 & 0.890 & -1.02E-06 \\ \hline
  R60 $=(\hat{\alpha}_{{\rm ES}_{60\%}},\hat{p}_{{\rm ES}_{60\%}})$ & 0.531 & 0.869 & -1.31E-06 \\ \hline
  RS $=(\bar{\alpha},\bar{p})$ & 1 & 0.824 & -4.21E-05 \\
  \hline
\end{tabular}
\caption{This Table shows the values of the policy $(\alpha,p)$ and the
implied utility of the policyholder for points of interest marked in
Figure~\ref{figg30-2}.}\label{tab}
\end{table}

\clearpage


\begin{thebibliography}{20}

%
%   \item Last, First I.\ and Last, First (year),   ``Article title,''
%         {\em Journal name (spelled out)}, volume, 123--129.
%
%   \item Last, First (year),    {\em Book title},
%         City: Publisher, 123--129.
%

%\bibitem{bor_81} Borch, K. (1981),``Is regulation of insurance companies
%necessary?,'' in {\em Geld, Banken und Versicherungen}, ed. by H.
%G\"opple and R. Henn., Athen\"aum Verlag, K\"onigstein, 717--731

\bibitem{sst} Bundesamt f\"ur Privatversicherungen (2006),
``Draft: Modelling of Groups and Group Effects,'' URL:
www.bpv.admin.ch/themen/00506/00530

\bibitem{ceiops3} Committee of European Insurance and Occupational
Pensions Supervisors (2007), ``QIS3 Technical Specifications. PART
I: INSTRUCTIONS,'' URL: www.ceiops.org/content/view/118/124

\bibitem{ceiops4} Committee of European Insurance and Occupational
Pensions Supervisors (2007), ``QIS4 Technical Specifications,'' URL:
http://ec.europa.eu/internal\_market/insurance/docs/solvency/qis4/
technical\_specifications\_en.pdf

\bibitem{crof} The Chief Risk Officer Forum (2007),
``A benchmarking study of the CRO forum on the QIS III
calibration,'' URL: www.croforum.org/publications.ecp

\bibitem{doh_89} Doherty, N. A. (1989), ``On the
capital structure of insurance firms,'' in {\em Financial models of
insurance solvency}, ed. by J.D. Cummins and R.A. Derrig, Kluwer
Academic Publishers, MA, USA, 267--281

\bibitem{fil_09} Filipovi\'c, D. (2009), ``Multi-Level Risk Aggregation,'' {\em ASTIN Bulletin}, 39, 565--575

\bibitem{fp_84} Finsinger, J.\ and Pauly, M. (1984), ``Reserve levels
and reserve requirements for profit-maximizing insurance firms,'' in
{\em Risk and Capital. Proceedings of the 2nd Summer Workshop on
Risk and Capital}, ed. by M. Beckmann and W. Krelle,
Springer-Verlag, Berlin Heidelberg, 160--180

\bibitem{gre_84} Green, R.C. (1984), ``Investment incentives, debt, and warrants,'' {\em
Journal of Financial Economics}, 13, 115--136

\bibitem{jm_76} Jensen, M.J.\ and Meckling, W.R. (1976), ``Theory of the firm:
Managerial behavior, agency cost, and ownership structure,'' {\em
Journal of Financial Economics}, 3, 305--360

\bibitem{macm_93} MacMinn, R.D. (1993), ``On the risk shifting problem and convertible bonds,'' {\em
Advances in Quantitative Analysis of Finance and Accounting}, 2,
181--200

\bibitem{mw_87} MacMinn, R.D.\ and Witt, R.C. (1987), ``A financial theory of the
insurance firm under uncertainty and regulatory constraints,'' {\em
Geneva Papers on Risk and Insurance}, 12, 3--20

\bibitem{ms_81} Mayers, D.\ and Smith, C.W. (1981), ``Contractual provisions,
organizational structure, and conflict control in insurance
markets,'' {\em Journal of Business}, 54, 407--434

\bibitem{ms_88} Mayers, D.\ and Smith, C.W. (1988), ``Ownership structure across lines
of property casualty insurance,'' {\em Journal of Law and
Economics}, 31, 351--378

\bibitem{mw_80} McCabe, G.M.\ and Witt, R.C. (1980), ``Insurance pricing and regulation
under uncertainty: A chance-constrained approach,'' {\em Journal of
Risk and Insurance}, 47, 607--635

\bibitem{mfe} McNeil, A., Frey, R.,\ and Embrechts, P. (2005), {\em
Quantitative Risk Management}, Princeton University Press

\bibitem{mus_81} Munch, P.\ and Smallwood, D. (1981), ``Theory of solvency regulation in
the property and casualty insurance industry,'' in {\em Studies in
Public Regulation}, ed. by G. Fromm, MIT Press, 119--167

\bibitem{rgw_99} Rees, R.,\ Gravelle, H.,\ and Wambach, A. (1999), ``Regulation of insurance
markets,'' {\em Geneva Papers on Risk and Insurance Theory}, 24,
55--68

\bibitem{rs_70} Rothschild, M.\ and Stiglitz, J. (1970), ``Increasing risk: I. A definition,''
{\em Journal of Economic Theory}, 2, 225--243

\bibitem{roc_70} Rockafellar, R.T. (1970), {\em Convex Analysis},
Princeton University Press

\end{thebibliography}
\end{document}